\crefname{equation}{}{}
\Crefname{equation}{}{}
\crefname{thm}{theorem}{theorems}
\Crefname{thm}{Theorem}{Theorems}
\crefname{clm}{claim}{claims}
\Crefname{clm}{Claim}{Claims}
\Crefname{coro}{Corollary}{Corollaries}
\Crefname{lem}{Lemma}{Lemmas}
\Crefname{sec}{Section}{Sections}
\crefname{app}{appendix}{appendices}
\Crefname{app}{Appendix}{Appendices}
\crefname{prop}{proposition}{propositions}
\Crefname{prop}{Proposition}{Propositions}
\Crefname{propty}{Property}{Properties}
\crefname{figure}{fig.}{figures}
\Crefname{figure}{Fig.}{Figures}
\crefname{defn}{definition}{definitions}
\Crefname{defn}{Definition}{Definitions}
\crefname{fact}{fact}{facts}
\Crefname{fact}{Fact}{Facts}
\crefname{appendix}{appendix}{appendices}
\Crefname{appendix}{Appendix}{Appendices}
\crefname{algo}{algorithm}{algorithms}
\Crefname{algo}{Algorithm}{Algorithms}
\crefname{algorithm}{algorithm}{algorithms}
\Crefname{algorithm}{Algorithm}{Algorithms}
\crefname{conj}{conjecture}{conjectures}
\Crefname{conj}{Conjecture}{Conjectures}
\crefname{obs}{observation}{observations}
\Crefname{obs}{Observation}{Observations}
\newtheorem{defn}{Definition}
\newtheorem{rem}{Remark}
\newtheorem{lem}{Lemma}
\newtheorem{thm}{Theorem}
\newtheorem{conj}{Conjecture}
\newtheorem{clm}{Claim}
\newtheorem{prop}{Proposition}
\newcommand{\xor}{\oplus}
\newcommand{\xmark}{\ding{55}}
\newcommand{\expec}{\mathbb{E}}
\newtheorem{exmp}{Example}
\title{\huge{On Throughput-Smoothness Trade-offs \\ in Streaming Communication}}
\author{Gauri Joshi,~\IEEEmembership{Student Member,~IEEE,} 
	 Yuval Kochman,~\IEEEmembership{Member,~IEEE,}
  	Gregory W. Wornell,~\IEEEmembership{Fellow,~IEEE} \thanks{This
    work was presented in the INFOCOM workshop on Communication and Networking Techniques for Contemporary Video, Apr 2014, and the International Symposium on Network Coding, Jun 2014.}  \thanks{G.~Joshi and G. W. Wornell are with the Department of Electrical Engineering and Computer Science, Massachusetts Institute of Technology, Cambridge, MA. Y.~Kochman is with the School of Computer Science and Engineering, the Hebrew University of Jerusalem, Israel.
    (E-mail: gauri@mit.edu, yuvalko@cs.huji.ac.il, gww@mit.edu)} }
\begin{document}
\maketitle

\begin{abstract}
Unlike traditional file transfer where only total delay matters, streaming applications impose delay constraints on each packet and require them to be in order. To achieve fast in-order packet decoding, we have to compromise on the throughput. We study this trade-off between throughput and smoothness in packet decoding. We first consider a point-to-point streaming and analyze how the trade-off is affected by the frequency of block-wise feedback, whereby the source receives full channel state feedback at periodic intervals. We show that frequent feedback can drastically improve the throughput-smoothness trade-off. Then we consider the problem of multicasting a packet stream to two users. For both point-to-point and multicast streaming, we propose a spectrum of coding schemes that span different throughput-smoothness tradeoffs. One can choose an appropriate coding scheme from these, depending upon the delay-sensitivity and bandwidth limitations of the application. This work introduces a novel style of analysis using renewal processes and Markov chains to analyze coding schemes. 
\end{abstract}

\begin{keywords}
erasure codes, streaming, in-order packet delivery, throughput-delay trade-off
\end{keywords}


\section{Introduction}
\label{sec:intro}
\pdfoutput=1
\subsection{Motivation}
A recent report \cite{sandvine_report} shows that $62 \%$ of the Internet traffic in North America comes from real-time streaming applications. Unlike traditional file transfer where only total delay matters, streaming imposes delay constraints on each individual packet. Further, many applications require in-order playback of packets at the receiver. Packets received out of order are buffered until the missing packets in the sequence are successfully decoded. In audio and video applications some packets can be dropped without affecting the streaming quality. However, other applications such as remote desktop, and collaborative tools such as Dropbox \cite{dropbox} and Google Docs \cite{Googledoc} have strict order constraints on packets, where packets represent instructions that need to be executed in order at the receiver. 

Thus, there is a need to develop transmission schemes that can ensure in-order packet delivery to the user, with efficient use of available bandwidth. To ensure that packets are decoded in order, the transmission scheme must give higher priority to older packets that were delayed, or received in error. However, repeating old packets instead of transmitting new packets results in a loss in the overall rate of packet delivery to the user, i.e., the throughput. Thus there is a fundamental trade-off between throughput and in-order decoding delay. 

The throughput loss incurred to achieve smooth in-order packet delivery can be significantly reduced if the source receives feedback about packet losses. Then the source can adapt its future transmission strategy to strike the right balance between old and new packets. We study this interplay between feedback and the throughput-smoothness trade-off. This analysis can help design transmission schemes that achieve the best throughput-smoothness trade-off with a limited amount of feedback. 

Further, if there are more than one users that request a common stream of packets, then there is an additional inter-dependence between the users. Since the users decode different sets of packets depending of their channel erasures, a packet that is innovative and in-order for one user, may be redundant for another. Thus, the source must strike a balance of giving priority to each of the users. We present a framework to analyze such multicast streaming scenarios. 

\subsection{Previous Work}
When there is immediate and error-free feedback, it is well understood that a simple Automatic-repeat-request (ARQ) scheme is both throughput and delay optimal. But only a few papers in literature have analyzed streaming codes with delayed or no feedback. Fountain codes \cite{luby} are capacity-achieving erasure codes, but they are not suitable for streaming because the decoding delay is proportional to the size of the data. Streaming codes without feedback for constrained channels such as adversarial and cyclic burst erasure channels were first proposed in \cite{emin_thesis}, and also extensively explored in \cite{badr_khisti_isit, patil_badr_khisti}. The thesis \cite{emin_thesis} also proposed codes for more general erasure models and analyzed their decoding delay. These codes are based upon sending linear combinations of source packets; indeed, it can be shown that there is no loss in restricting the codes to be linear. 

However, decoding delay does not capture \emph{in order} packet delivery, which is required for streaming applications. This aspect is captured in the delay metrics in \cite{huan_paper} and \cite{gauri_isit_paper}, which consider that packets are played in-order at the receiver. The authors in \cite{huan_paper} analyze the playback delay of real-time streaming for uncoded packet transmission over a channel with long feedback delay. In \cite{gauri_isit_paper, gauri_masters_thesis_2012} we show that the number of interruptions in playback scales $\Theta(\log n)$ for a stream of length $n$. We then proposed codes that minimize the pre-log term for the no feedback and immediate feedback cases. Recent work \cite{mahadaviani_isit_2015} shows that if the stream is available beforehand (non real-time), it is possible to guarantee constant number of interruptions, independent of $n$. This analysis of in-order playback delay blends tools from renewal processes and large deviations with traditional coding theory. These tools were later employed in interesting work; see e.g.\,\cite{cloud_medard_2015, karzand_2015} which also consider in-order packet delivery delay. In this paper we aim to understand how the frequency of feedback about erasures affects the design of codes to ensure smooth point-to-point streaming.

When the source needs to stream the data to multiple users, even the immediate feedback case becomes non-trivial. The use of network coding in multicast packet transmission has been studied in \cite{fragouli_broadcast, delay_control_online_nw_coding, fu_sadeghi_medard, sundar_sadeghi_medard, shah_three_receiver, broadcast_stability}. The authors in \cite{fragouli_broadcast} use as a delay metric the number of coded packets that are successfully received, but do not allow immediate decoding of a source packet. For two users, the paper shows that a greedy coding scheme is throughput-optimal and guarantees immediate decoding in every slot. However, optimality of this scheme has not been proved for three or more users. In \cite{delay_control_online_nw_coding}, the authors analyze decoding delay with the greedy coding scheme in the two user case. However, both these delay metrics do not capture the aspect of in-order packet delivery. 

In-order packet delivery for multicast with immediate feedback is considered in \cite{fu_sadeghi_medard, sundar_sadeghi_medard, shah_three_receiver}. These works consider that packets are generated by a Poisson process and are greedily added to all future coded combinations. Another related work is \cite{broadcast_stability} which also considers Poisson packet generation in a two-user multicast scenario and derives the stability condition for having finite delivery delay. However, in practice the source can use feedback about past erasures to decide which packets to add to the coded combinations, instead of just greedy coding over all generated packets. In this paper we consider this model of source-controlled packet transmission.

\subsection{Our Contributions}
For the point-to-point streaming scenario, also presented in \cite{gauri_infocom_14}, we consider the problem of how to effectively utilize feedback received by the source to ensure in-order packet delivery to the user. We consider block-wise feedback, where the source receives information about past channel states at periodic intervals. In contrast to playback delay considered in \cite{huan_paper} and \cite{gauri_isit_paper}, we propose a delay metric called the smoothness exponent. It captures the burstiness in the in-order packet delivery. In the limiting case of immediate feedback, we can use ARQ and achieve the optimal throughput and smoothness simultaneously. But as the feedback delay increases, we have to compromise on at least one of these metrics. Our analysis shows that for the same throughput, having more frequent block-wise feedback significantly improves the smoothness of packet delivery. This conclusion is reminiscent of \cite{sahai_error_exponent} which studied the effect of feedback on error exponents. We present a spectrum of coding schemes spanning the throughput-smoothness trade-off, and prove that they give the best trade-off within a broad class of schemes for the no feedback, and small feedback delay cases. 

Next, we extend this analysis to the multicast scenario. We focus of two user case with immediate feedback to the source. Since each user may decode a different set of packets depending of the channel erasures, the in-order packet required by one user may be redundant to the other user. Thus, giving priority to one user can cause a throughput loss to the other. We analyze this interplay between the throughput-smoothness trade-offs of the two users using a Markov chain model for in-order packet decoding. This work was presented in part in \cite{gauri_netcod}. 

\subsection{Organization}

The rest of the paper is organized as follows. In Section~\ref{sec:point_to_point} we analyze the effect of block-wise feedback on the throughput-smoothness trade-off in point-to-point streaming. %
In \Cref{sec:multicast} we present a framework for analyzing in-order packet delivery in multicast streaming with immediate feedback. %
\Cref{sec:conclu} summarizes the results and presents future directions. The longer proofs are deferred to the appendix. 

\section{Preliminaries}
\label{sec:sys_model}
\pdfoutput=1
\label{subsec:sys_model}
\subsection{Source and Channel Model} 

The source has a large stream of packets $s_1, s_2, \cdots, s_n$. The encoder creates a coded packet $y_{n} = f(s_1,\,\,s_2 \,\,..s_n)$ in each slot $n$ and transmits it over the channel. The encoding function $f$ is known to the receiver. For example, if $y_{n}$ is a linear combination of the source packets, the coefficients are included in the transmitted packet so that the receiver can use them to decode the source packets from the coded combination. Without loss of generality, we can assume that $y_n$ is a linear combination of the source packets. The coefficients are chosen from a large enough field such that the coded combinations are independent with high probability.

Each coded combination is transmitted to $K$ users $U_1, U_2, \cdots, U_K$. We consider an i.i.d.\ erasure channel to each user such that every transmitted packet is received successfully at user $U_k$ with probability $p_k$, and otherwise received in error and discarded. The erasure events are independent across the users. An erasure channel is a good model when encoded packets have a set of checksum bits that can be used to verify with high probability whether the received packet is error-free. For the single-user case considered in \Cref{sec:point_to_point}, we denote the channel success probability as $p$, without the subscript. 

\subsection{Packet Delivery}
The application at each user requires the stream of packets to be \emph{in order}. Packets received out of order are buffered until the missing packets in the sequence are decoded. We assume that the buffer is large enough to store all the out-of-order packets. Every time the earliest undecoded packet is decoded, a burst of in-order decoded packets is delivered to the application. For example, suppose that $s_1$ has been delivered and $s_3$, $s_4$, $s_6$ are decoded and waiting in the buffer. If $s_2$ is decoded in the next slot, then $s_2$, $s_3$ and $s_4$ are delivered to the application. 

\subsection{Feedback Model}
We consider that the source receives block-wise feedback about channel erasures after every $d$ slots. Thus, before transmitting in slot $kd+1$, for all integers $k \geq 1$, the source knows about the erasures in slots $(k-1)d +1$ to $kd$. It can use this information to adapt its transmission strategy in slot $kd +1$. Block-wise feedback can be used to model a half-duplex communication channel where after every $d$ slots of packet transmission, the channel is reserved for each user to send $d$ bits of feedback to the source about the status of decoding. The extreme case $d= 1$, corresponds to immediate feedback when the source has complete knowledge about past erasures. And when $d \rightarrow \infty$, the block-wise feedback model converges to the scenario where there is no feedback to the source. 

Note that the feedback can be used to estimate $p_i$ for $1 \leq i \leq K$, the success probablities of the erasure channels, when they are unknown to the source. Thus, the coding schemes we propose for $d < \infty$ are universal; they can be used even when the channel quality of unknown to the source.

\subsection{Notions of Packet Decoding}
We now define some notions of packet decoding that aid the presentation and analysis of coding schemes in the rest of the paper. 

\begin{defn}[Innovative Packets]
\label{defn:innov_pkts}
A coded packet is said to be innovative to a user if it is linear independent with respect to the coded packets received by that user until that time.
\end{defn}

\begin{defn}[Seen Packets]
\label{defn:seen_pkts}
The transmitted marks a packet $s_k$ as ``seen" by a user when it knows that the user has successfully received a coded combination that only includes $s_k$ and packets $s_i$ for $1 \leq i < k$. 
\end{defn}

Since each user requires packets strictly in-order, the transmitter can stop including $s_k$ in coded packets when it knows that all the users have seen it. This is because the users can decode $s_k$ once all $s_i$ for $i < k$ are decoded. 

\begin{defn}[Required packet]
\label{defn:reqd_pkt}
The required packet of $U_i$ is its earliest undecoded packet. Its index is denoted by $r_i$. 
\end{defn}
In other words, $s_{r_i}$ is the first unseen packet of user $U_i$. For example, if packets $s_1$, $s_3$ and $s_4$ have been decoded at user $U_i$, its required packet $s_{r_i}$ is $s_2$.

\subsection{Throughput and Delay Metrics}
\label{subsec:perf_metrics}

We now define the metrics for throughput and smoothness in packet delivery. We define them for a single user here, but the definitions directly extend to multiple users. 

\begin{defn}[Throughput]
\label{defn:thpt}
If $I_n$ is the number of packets delivered in-order to a user until time $n$, the throughput $\tau$ is,
\begin{align}
\lim_{n \rightarrow \infty} \frac{I_n}{n} \text{   in probability} .
\end{align}
\end{defn}

The maximum possible throughput is $\tau = p$, where $p$ is the success probability of our erasure channel. The receiver application may require a minimum level of throughput. For example, if applications with playback require $\tau$ to be greater than the playback rate. 

The throughput captures the overall rate at which packets are delivered, irrespective of their delays. If the channel did not have any erasures, packet $s_k$ would be delivered to the user in slot $k$. The random erasures, and absence of immediate feedback about past erasures results in variation in the time at which packets are delivered. We capture the burstiness in packet delivery using the following delay metric.

\begin{defn}[Smoothness Exponent]
Let $D_k$ be in-order decoding delay of packet $s_k$, the earliest time at which all packets $p_1, \cdots p_k$ are decoded by the receiver. The smoothness exponent $\gamma_k^{(s)}$ is defined as the asymptotic decay rate of $D_k$, which is given by
\begin{align}
\gamma_k^{(s)}= - \lim_{n \rightarrow \infty} \frac { \log \Pr(D_k > n)}{n} \label{eqn:mu_def}
\end{align}
\end{defn}

The relation \eqref{eqn:mu_def} can also be stated as $\Pr(D >n) \doteq e^{-n \gamma}$ where $\doteq$ stands for asymptotic equality defined in \cite[Page 63]{thomas_cover}. For simplicity of analysis we define another delay exponent, the inter-delivery defined as follows. \Cref{thm:equiv_exponents} shows the equivalence of the smoothness and the inter-delivery exponent for time-invariant schemes.

\begin{defn}[Inter-delivery Exponent]
\label{defn:inter_deli_exponent}
Let $T_1$ be the first inter-delivery time, that is the first time instant when one or more packets are decoded in-order. The inter-delivery exponent $\lambda$ is defined as the asymptotic decay rate of $T_1$, which is given by
\begin{align}
\lambda= - \lim_{n \rightarrow \infty} \frac { \log \Pr(T_1 > n)}{n} \label{eqn:decay_rate_T1}
\end{align}
\end{defn}

In this paper we focus on time-invariant transmission schemes where the coding strategy is fixed across blocks of transmission, formally defined as follows.

\begin{defn}[Time-invariant schemes]
\label{defn:time_invariant}
A time-invariant scheme is represented by a vector $\mathbf{x} = [x_1, \cdots x_d]$ where $x_i$, for $ 1 \leq i \leq d$, are non-negative integers such that $\sum_{i} x_i = d$. In each block we transmit $x_i$ independent linear combinations of the $i$ lowest-index unseen packets in the stream, for $1 \leq i \leq d$. 
\end{defn}
The above class of schemes is referred to as time-invariant because the vector $\mathbf{x}$ is fixed across all blocks. Note that there is also no loss of generality in restricting the length of the vector $\mathbf{x}$ to $d$. This is because each block can provide only up to $d$ innovative coded packets, and hence there is no advantage in adding more than $d$ unseen packets to the stream in a given block. 

\begin{thm}
\label{thm:equiv_exponents}
For a time-invariant scheme, the smoothness exponent $\gamma_k^{(s)}$ of packet $s_k$ for any $ k \leq \infty$ is equal to $\lambda$, the inter-delivery exponent.
\end{thm}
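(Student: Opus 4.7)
\noindent\textit{Proof plan.} The strategy is to sandwich $\gamma_k^{(s)}$ between the same value $\lambda$ from both sides.

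For the easy direction, note that under in-order delivery the $k$-th packet cannot be released before the first one, and $s_1$ is delivered exactly at the first inter-delivery instant $T_1$. Hence $D_k \ge T_1$ for every $k$, which gives $\Pr(D_k > n) \ge \Pr(T_1 > n) \doteq e^{-n\lambda}$ and therefore $\gamma_k^{(s)} \le \lambda$.

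For the reverse bound $\gamma_k^{(s)} \ge \lambda$, let $T_i$ denote the $i$-th inter-delivery time. Since each inter-delivery burst releases at least one packet to the application, $s_k$ is surely delivered by the end of the $k$-th burst, so $D_k \le T_1 + T_2 + \cdots + T_k$. The crucial ingredient is that, under a time-invariant scheme, every $T_i$ inherits the same tail exponent $\lambda$ from $T_1$: between two consecutive deliveries the source transmits according to the same fixed vector $\mathbf{x}$ on the shifted stream of unseen packets, so the dynamics governing the next delivery are statistically equivalent to those that produced $T_1$ (up to an auxiliary finite-state chain describing the buffered seen-but-undelivered packets). A Chernoff/large-deviations bound then gives $\Pr\bigl(\sum_{i=1}^k T_i > n\bigr) \doteq e^{-n\lambda}$ for any fixed $k$, from which $\Pr(D_k > n) \doteq e^{-n\lambda}$ and hence $\gamma_k^{(s)} \ge \lambda$ follow. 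The $k=\infty$ case drops out by taking $k \to \infty$, since neither side of the sandwich depends on $k$ in the exponent.

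The main obstacle is the rigorous justification that each $T_i$ has tail exponent $\lambda$. At a delivery epoch the system state may include a non-trivial buffer of seen-but-undelivered packets, so $T_i$ is not literally identically distributed as $T_1$. The cleanest route is a coupling / regenerative decomposition on the finite-state Markov chain that tracks this buffer configuration: one identifies a recurrent ``reset'' state where the buffer is empty, shows that the first return to it has an exponentially decaying tail with rate $\lambda$, and uses it to stochastically dominate each $T_i$ by an independent copy of a random variable with exponent $\lambda$. Once this exponent match for the $T_i$'s is in place, both the Chernoff-bound step on the sum and the passage to $k=\infty$ are routine.
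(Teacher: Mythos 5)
Your sandwich structure is exactly the paper's: lower-bound $\Pr(D_k>n)$ by $\Pr(T_1>n)$, and upper-bound it by decomposing $D_k$ into a sum of at most $k$ inter-delivery times and dominating that sum by a sum of independent copies of a variable with exponent $\lambda$. The one step you flag as ``the main obstacle'' --- controlling the tails of the later $T_i$'s --- is precisely where the paper's argument lives, and it is closed more simply than your proposed regenerative/coupling construction. The paper observes directly that $\Pr(T_1>t)\ge\Pr(T_i>t)$ for all $i$: the first inter-delivery interval starts from the worst possible state (no buffered, not-yet-decodable combinations), and for a time-invariant scheme any combinations accumulated by the start of a later interval can only accelerate the next in-order decoding. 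So each $T_i$ is stochastically dominated by $T_1$ itself, and one replaces the dependent sum $T_1+\cdots+T_W$ by $T_1^{(1)}+\cdots+T_1^{(W)}$ with $T_1^{(i)}$ i.i.d.\ copies of $T_1$ --- no reset state, return-time tail bound, or positive-recurrence argument is needed. (The sum is then handled by writing the tail as a sum over compositions of $n$ into $W$ parts, each term decaying like $e^{-\lambda n}$ and the number of compositions $\binom{n+W-1}{W-1}$ being subexponential for fixed $k$; this is equivalent to your Chernoff step.) Two smaller remarks: your claim that $s_k$ is delivered by the end of the $k$-th burst is fine and matches the paper's $W\le k$; but your dismissal of the $k=\infty$ case by ``taking $k\to\infty$'' glosses over the fact that the subexponentiality of the combinatorial factor relies on $k$ being fixed relative to $n$ --- the paper's proof has the same implicit restriction ($W\le k\ll n$), so you are in the same position, but it is not a free limit.
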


Thus, in the rest of the paper we study the trade-off between throughput $\tau$ and the inter-delivery exponent $\lambda$.

\section{Point-to-point Streaming}
\label{sec:point_to_point}
\pdfoutput=1
We first consider the extreme cases of immediate feedback and no feedback in Section~\ref{subsec:immediate_feedback} and Section~\ref{subsec:no_feedback} respectively. In Section~\ref{subsec:block_wise_fb} we propose coding schemes for the general case of block-wise feedback after every $d$ slots. 

\subsection{Immediate Feedback}
\label{subsec:immediate_feedback} 

In the immediate feedback $(d=1)$ case, the source has complete knowledge of past erasures before transmitting each packet. We can show that a simple automatic-repeat-request (ARQ) scheme is optimal in both $\tau$ and $\lambda$. In this scheme, the source transmits the lowest index unseen packet, and repeats it until the packet successfully goes through the channel. 

Since a new packet is received in every successful slot, the throughput $\tau =p$, the success probability of the erasure channel. The ARQ scheme is throughput-optimal because the throughput $\tau = p$ is equal to the information-theoretic capacity of the erasure channel \cite{thomas_cover}. Moreover, it also gives the optimal the inter-delivery exponent $\lambda$ because one in-order packet is decoded in every successful slot. To find $\lambda$, first observe that the tail distribution of the time $T_1$, the first inter-delivery time is,
\begin{align}
\Pr(T_1>n) &= (1-p)^{n} 
\end{align}
Substituting this in Definition~\ref{defn:inter_deli_exponent} we get the exponent $\lambda = -\log(1-p)$. Thus, the trade-off for the immediate feedback case is $(\tau, \lambda) = (p,-\log(1-p))$.

From this analysis of the immediate feedback case we can find limits on the range of achievable $(\tau,\lambda)$ for any feedback delay $d$. Since a scheme with immediate feedback can always simulate one with delayed feedback, the throughput and delay metrics $(\tau,\lambda)$ achievable for any feedback delay $d$ must lie in the region $0 \leq \tau \leq p$, and $0 \leq \lambda \leq -\log(1-p)$. 

%

\subsection{No Feedback}
\label{subsec:no_feedback}
Now we consider the other extreme case $(d = \infty)$, corresponding to when there is no feedback to the source. We propose a coding scheme that gives the best $(\tau, \lambda)$ trade-off among the class of full-rank codes, defined as follows. 
\begin{defn}[Full-rank Codes]
\label{defn:full_rank_codes}
In slot $n$ we transmit a linear combination of all packets $s_1$ to $s_{V[n]}$. We refer to $V[n]$ as the transmit index in slot $n$. 
\end{defn} 
\begin{conj}
\label{conj:full_rank}
Given transmit index $V[n]$, there is no loss of generality in including all packets $s_1$ to $s_{V[n]}$.
\end{conj}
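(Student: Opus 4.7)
The plan is to prove the conjecture by pathwise dominance: on every realization of the erasure process, the full-rank code with index sequence $V[n]$ decodes at least as many in-order packets as any other code sharing the same $V[n]$ sequence. This pathwise dominance immediately yields dominance of both $\tau$ and $\lambda$.

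First I would reformulate in-order decoding as a rank condition. Letting $A$ be the matrix whose rows are the successfully received coded packets, the packets $s_1,\ldots,s_k$ are recoverable in order iff $e_1,\ldots,e_k$ lie in the row-space of $A$, equivalently iff $\mathrm{rank}(A)-\mathrm{rank}(A|_{\text{cols}>k})=k$. Next, invoking the Schwartz--Zippel lemma, for random coefficients drawn from a sufficiently large field the rank of any submatrix of $A$ coincides (almost surely) with the combinatorial (transversal-matroid) rank determined by its support, i.e., the maximum bipartite matching from rows to active columns. The in-order decoding event thus reduces, with probability approaching one, to a deterministic condition on the code's support pattern and the erasure pattern.

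With this setup, let $T$ denote the support of the alternative subset scheme and $T^F=\{(n,j):1\le j\le V[n]\}$ the full-rank support. Writing $q_T(k)=r_T([N])-r_T((k,N])$ for the matroid-theoretic effective rank of the first $k$ columns (where $r_T(S)$ is the max matching using columns $S$), the in-order decoding event up to $s_k$ becomes $q_T(k)=k$. The core technical step is the monotonicity $q_{T^F}(k)\ge q_T(k)$ under $T\subseteq T^F$ with identical row maxima $V[n]$. My plan is to take a max matching in $T$ that decomposes as $M=M_1\cup M_2$, where $M_1$ has size $q_T(k)$ and lies in columns $[k]$ while $M_2$ is a max matching in $T|_{>k}$ on rows disjoint from $M_1$, and then extend $M_1$ with a max matching of $T^F|_{>k}$ on the complementary rows. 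Establishing that this extension is itself a max matching in $T^F$ delivers the inequality, from which pathwise dominance of in-order decoding events and hence of $T_1$ and $I_n$ follows.

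The main obstacle I expect is precisely this monotonicity step: the difference $r_T([N])-r_T((k,N])$ is not monotone in $T$ for arbitrary edge additions, since enlarging $T$ in columns $>k$ can increase $r_T((k,N])$ without a matching increase in $r_T([N])$. The argument must therefore exploit the specific structure of $T^F$, namely that edges added in passing from $T$ to $T^F$ lie only below the row maximum $V[n]$, and will typically lie below $k$ when the subset scheme is \emph{contiguous} (i.e., transmits a consecutive block of packets ending at $V[n]$, as is natural for the time-invariant schemes of Definition~\ref{defn:time_invariant}). Encoding this structural fact into an augmenting-path or matroid-union argument, and pinning down the precise class of subset schemes within which the conjecture is intended to hold, is the main nontrivial content of the proof.
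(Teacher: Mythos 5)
First, note that the paper does not prove this statement at all: it is left as an open conjecture, supported only by the heuristic remark that in-order delivery means every $s_j$ with $j<V[n]$ is needed before $s_{V[n]}$, and by the observation that a direct verification ``would involve checking all possible channel erasure patterns.'' So there is no proof in the paper to match yours against; any complete argument would be new content.

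Your reformulation (decodability of $s_1,\dots,s_k$ as $\mathrm{rank}(A)-\mathrm{rank}(A|_{\mathrm{cols}>k})=k$, and reduction of generic ranks to transversal-matroid ranks of the support pattern) is correct, and you are right to be suspicious of the monotonicity step---but the situation is worse than a gap: the pathwise dominance on which your whole plan rests is false, even with identical row maxima $V[n]$. Concretely, take $V[n]=\lceil rn\rceil$ with $r=1/2$, suppose slots $1$--$4$ are erased and slots $5,6$ (both with transmit index $3$) are received. The subset scheme sending $y_5=s_1+s_3$ and $y_6=s_3$ decodes $s_1$ in order at slot $6$, whereas the full-rank scheme delivers two generic combinations of $s_1,s_2,s_3$ and decodes nothing. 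In matroid terms, the extra edges of $T^F$ in columns $(k,V[n])$ attach to rows that already reach column $V[n]>k$, yet they still raise $r((k,N])$ by resolving column contention among such rows, so $q_{T^F}(k)<q_T(k)$ is possible. Hence no per-realization comparison can prove the conjecture; any proof must compare the schemes in distribution, showing that the erasure patterns on which a subset code wins are exponentially negligible relative to the dominant error event governing $D(r\|p)$. Your fallback of restricting to ``contiguous'' subset schemes does not rescue the plan either, since it proves a weaker statement than the conjecture (which quantifies over all codes with a given transmit index), and the time-invariant schemes of Definition~\ref{defn:time_invariant} belong to the block-feedback setting, not the no-feedback setting in which this conjecture is used.
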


We believe this conjecture is true because the packets are required in-order at the receiver. Thus, every packet $s_j$, $j < V[n]$ is required before packet $s_{V[n]}$ and there is no advantage in excluding $s_j$ from the combination. Hence we believe that there is no loss of generality in restricting our attention to full-rank codes. A direct approach to verifying this conjecture would involve checking all possible channel erasure patterns.


\begin{thm}
\label{thm:opt_no_feedback}
The optimal throughput-smoothness trade-off among full-rank codes is $(\tau, \lambda) = (r , D(r \| p))$ for all $0 \leq r < p$. It is achieved by the coding scheme with $V[n] = \lceil rn \rceil$ for all $n$. 
\end{thm}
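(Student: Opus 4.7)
The statement has two halves: achievability of $(\tau,\lambda)=(r,D(r\|p))$ by the scheme $V[n]=\lceil rn\rceil$, and a converse showing no full-rank code can do better.

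For the achievability, the crucial structural observation is that with $V[n]=\lceil rn\rceil$ the transmit index increments by at most one per slot. Under this condition, a simple induction exploiting the fact that random linear combinations over a large field are almost surely of full staircase rank yields the identity
\begin{align*}
T_1 \;=\; \min\{\, k : N[k] \geq V[k] \,\},
\end{align*}
where $N[k]$ denotes the cumulative number of successful receptions by slot $k$. The throughput claim follows immediately from the strong law: $N[n]/n\to p>r=\lim V[n]/n$, so eventually $N[n]\geq V[n]$ and all currently transmitted packets are decoded, giving $I_n\sim rn$ and $\tau=r$. For the smoothness exponent, the identity above rewrites
\begin{align*}
\Pr(T_1 > n) \;=\; \Pr\!\left(\, N[k] < \lceil rk\rceil \text{ for all } 1\le k\le n\,\right),
\end{align*}
the probability that a Bernoulli$(p)$ random walk stays strictly below the line $k\mapsto rk$ of slope $r<p$. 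Chernoff's bound gives the upper estimate $\Pr(T_1>n)\le\Pr(N[n]<rn)\doteq e^{-nD(r\|p)}$, hence $\lambda\ge D(r\|p)$, while a classical ballot / cycle-lemma estimate shows that a positive fraction of the paths with $N[n]$ slightly below $rn$ in fact stay below the barrier throughout, giving the matching lower bound $\Pr(T_1>n)\doteq e^{-nD(r\|p)}$ and therefore $\lambda=D(r\|p)$.

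For the converse, consider any full-rank scheme with (deterministic) transmit sequence $\{V[n]\}$ achieving throughput $\tau<p$. Since $I_n\le\min(N[n],V[n])$ and $N[n]/n\to p$, the convergence $I_n/n\to\tau$ forces $V[n]/n\to\tau$. Decoding at least one packet by time $n$ is \emph{strictly} stronger than the count condition $\{\exists k\le n:\,N[k]\ge V[k]\}$, because rank obstructions in the received staircase matrix (especially when $V$ is non-incremental) can prevent decoding even when the counts match. Therefore
\begin{align*}
\{T_1 > n\} \;\supseteq\; \{\, N[k] < V[k] \text{ for all } 1\le k\le n\,\}.
\end{align*}
Fix $\epsilon>0$; since $V[k]\ge(\tau-\epsilon)k$ for all large $k$, the ballot estimate applied to the line of slope $\tau-\epsilon$ yields $\Pr(T_1>n)\ge e^{-nD(\tau-\epsilon\|p)-o(n)}$. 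Letting $\epsilon\downarrow 0$ and using continuity of $D(\cdot\|p)$ on $[0,p)$ gives $\lambda\le D(\tau\|p)$, completing the converse.

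The main obstacle is the staircase-rank argument underlying the identity $T_1=\min\{k:N[k]\ge V[k]\}$: one must verify that when $V$ grows in unit steps the random coefficient matrix almost surely has full rank as soon as the count condition is met, and that the cycle-lemma estimate can be applied to the piecewise-constant barrier $k\mapsto\lceil rk\rceil$. Both are technical but classical; once in hand, both the achievability and the converse reduce to the same Chernoff exponent $D(r\|p)$.
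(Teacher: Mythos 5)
Your proposal is correct and follows essentially the same route as the paper's proof: both identify $\{T_1>n\}$ with the event that the Bernoulli success count stays below the transmit-index barrier, invoke a ballot-theorem argument to show the path constraint contributes only a subexponential factor, and extract the exponent from the dominant term $\Pr(N[n]<rn)\doteq e^{-nD(r\|p)}$. Your converse is organized a little differently (a monotone comparison against the line of slope $\tau-\epsilon$, versus the paper's reduction to non-decreasing $V$ and piecewise-constant rate profiles), but it rests on the same ballot/Chernoff estimate; the only nitpick is that throughput $\tau$ forces only $\liminf_n V[n]/n\ge\tau$ rather than $V[n]/n\to\tau$, which is in fact all your argument uses.
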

The term $D(r \| p)$ is the binary information divergence function, which is defined for $0 < p,r < 1$ as
\begin{equation}
D(r \| p ) = r \log \frac{r}{p} + (1-r) \log \frac{1-r}{1-p},
\end{equation}
where $0 \log 0$ is assumed to be $0$. As $r \rightarrow 0$, $D(r \| p)$ converges to $-\log(1-p)$, which is the best possible $\lambda$ as given in Section~\ref{subsec:immediate_feedback}. 

\begin{figure}[t]
\centering
\includegraphics[width=3.3in]{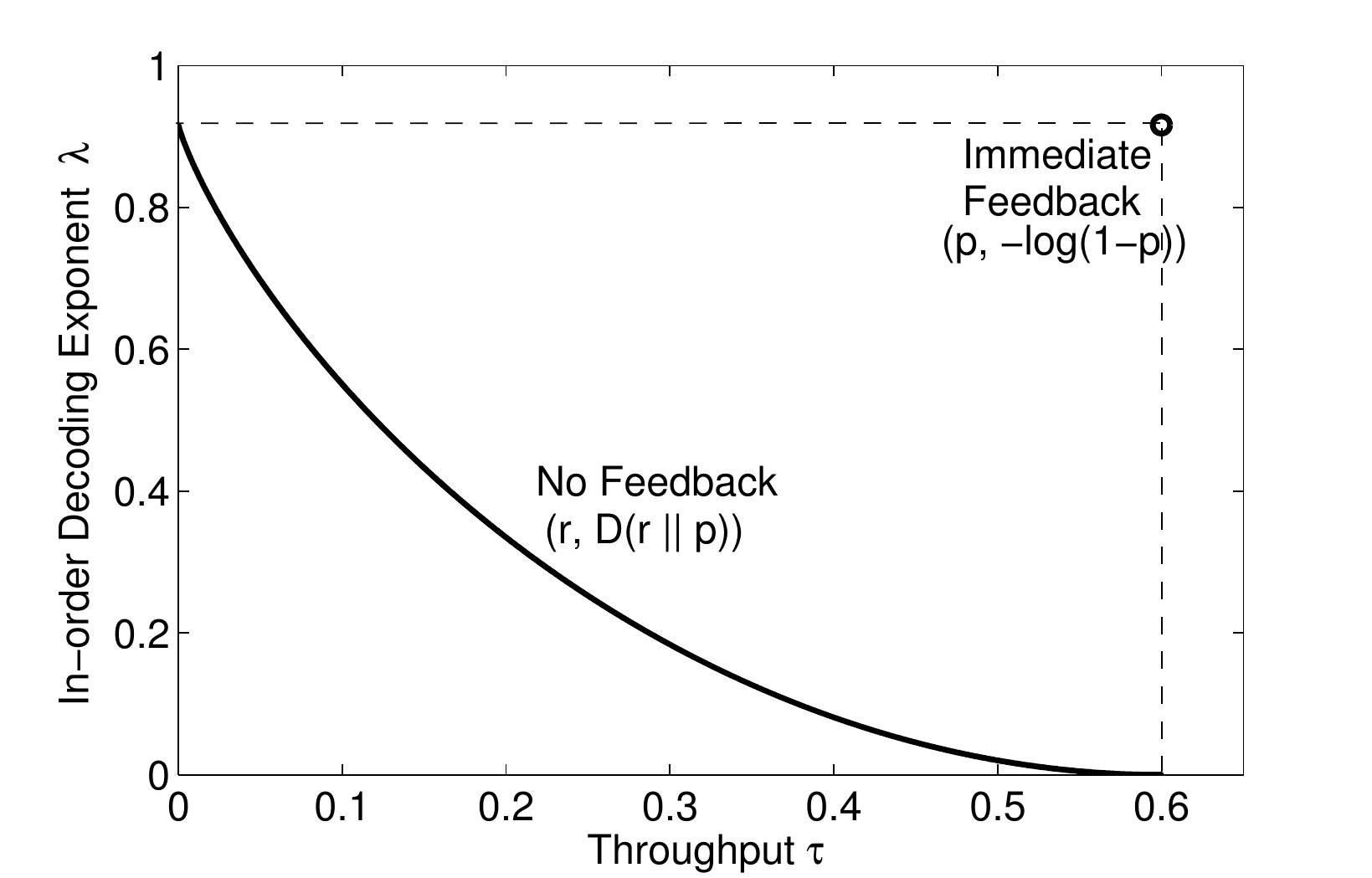}
\caption{The trade-off between inter-delivery exponent $\lambda$ and throughput $\tau$ with success probability $p = 0.6$ for the immediate feedback $(d= 1)$ and no feedback $(d = \infty)$ cases.}
\label{fig:inst_fb_no_fb_tradeoff}
\vspace{-0.2cm}
\end{figure}

Fig.~\ref{fig:inst_fb_no_fb_tradeoff} shows the $(\tau, \lambda)$ trade-off for the immediate feedback and no feedback cases, with success probability $p = 0.6$. The optimal trade-off with any feedback delay $d$ lies in between these two extreme cases.

\subsection{General Block-wise Feedback}
\label{subsec:block_wise_fb}
In Section~\ref{subsec:immediate_feedback} and Section~\ref{subsec:no_feedback} we considered the extreme cases of immediate feedback $(d=1)$ and no feedback $(d = \infty)$ respectively. We now analyze the $(\tau, \lambda)$ trade-off with general block-wise feedback delay of $d$ slots. We restrict our attention to a class of coding schemes called time-invariant schemes, which are defined as follows.



%
Given a vector $\mathbf{x}$, define $p_d$, as the probability of decoding the first unseen packet during the block, and $S_d$ as the number of innovative coded packets that are received during that block. We can express $\tau_{\mathbf{x}}$ and $\lambda_{\mathbf{x}}$ in terms of $p_d$ and $S_d$ as,
\begin{align}
(\tau_{\mathbf{x}}, \lambda_{\mathbf{x}}) &= \left(\frac{\expec[S_d]}{d}, -\frac{1}{d}\log(1-p_d) \right),\label{eqn:lambda_d}
\end{align}
where we get throughput $\tau_{\mathbf{x}}$ by normalizing the $\expec[S_d]$ by the number of slots in the slots. We can show that the probability $\Pr(T_1>kd)$ of no in-order packet being decoded in $k$ blocks is equal $(1-p_d)^k$. Substituting this in \eqref{eqn:decay_rate_T1} we get $\lambda_{\mathbf{x}}$.

\begin{exmp}
Consider the time-invariant scheme $\mathbf{x} = [1, 0, 3, 0]$ where block size $d=4$. That is, we transmit $1$ combination of the first unseen packet, and $3$ combinations of the first $3$ unseen packets. Fig.~\ref{fig:block_wise_exmp} illustrates this scheme for one channel realization. The probability $p_d$ and $\expec[S_d]$ are,
\begin{align}
p_d &= p + (1-p)\binom{3}{3} p^3 (1-p)^0 =  p + (1-p) p^3, \label{eqn:exmp_p_d}\\
\expec[S_d] &= \sum_{i=1}^{3} i \cdot \binom{4}{i} p^i (1-p)^{4-i} + 3p^4 = 4p-p^4, \label{eqn:exmp_E_S_d}
\end{align}
where in \eqref{eqn:exmp_E_S_d}, we get $i$ innovative packets if there are $i$ successful slots for $1 \leq i \leq 3$. But if all $4$ slots are successful we get only $3$ innovative packets. We can substitute \eqref{eqn:exmp_p_d} and \eqref{eqn:exmp_E_S_d} in \eqref{eqn:lambda_d} to get the $(\tau, \lambda)$ trade-off.
\end{exmp}

\begin{figure}[t]
\centering
\includegraphics[scale=0.45]{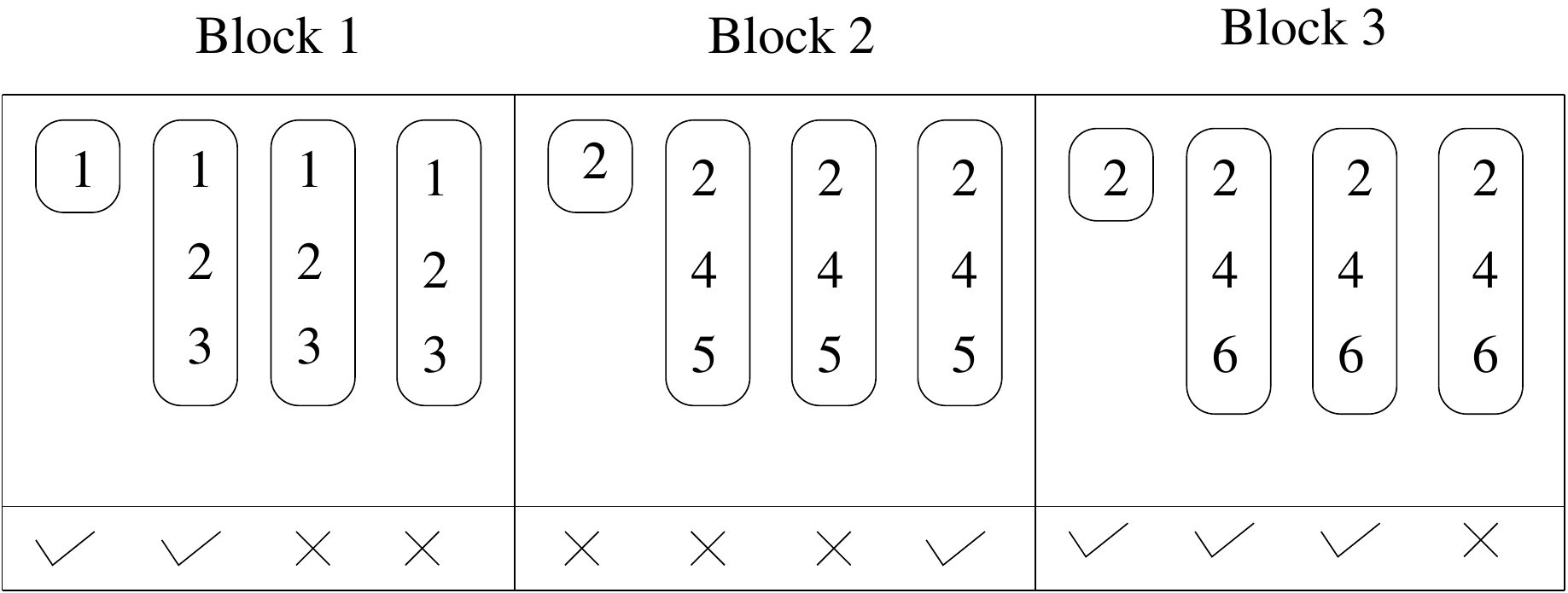}
\caption{Illustration of the time-invariant scheme $\mathbf{x} = [1,0,3,0]$ with block size $d=4$. Each bubble represents a coded combination, and the numbers inside it are the indices of the source packets included in that combination. The check and cross marks denote successful and erased slots respectively. The packets that are ``seen" in each block are not included in the coded packets in future blocks.}
\vspace{-0.1cm}
\label{fig:block_wise_exmp}
\end{figure}

\begin{rem}
\label{rem:uniqueness_of_schemes}
Time-invariant schemes with different $\mathbf{x}$ can be equivalent in terms of the $(\tau, \lambda)$. In particular, given $x_1 \geq 1$, if any $x_i = 0$, and $x_{i+1} = w \geq 1$, then the scheme is equivalent to setting $x_i = 1$ and $x_{i+1} = w-1$, keeping all other elements of $\mathbf{x}$ the same. This is because the number of independent linear combinations in the block, and the probability of decoding the first unseen is preserved by this transformation. For example, $\mathbf{x} = [ 1, 1, 2,0]$ gives the same $(\tau, \lambda)$ as $\mathbf{x} = [ 1,0,3,0]$.
\end{rem}
%
%
%
In Section~\ref{subsec:immediate_feedback} we saw that with immediate feedback, we can achieve $(\tau, \lambda) = (p , -\log(1-p))$. However, with block-wise feedback we can achieve optimal $\tau$ (or $\lambda$) only at the cost of sacrificing the optimality of the other metric. We now find the best achievable $\tau $ (or $\lambda$) with optimal $\lambda$ (or $\tau$).

\begin{clm}[Cost of Optimal Exponent $\lambda$]
\label{clm:cost_of_opt_lambda}
With block-wise feedback after every $d$ slots, and inter-delivery exponent $\lambda = -\log(1-p)$, the best achievable throughput $\tau = (1-(1-p)^d)/d$. 
\end{clm}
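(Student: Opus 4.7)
The plan is to use the characterization in \eqref{eqn:lambda_d} to translate the constraint on $\lambda$ into a constraint on $p_d$, and then to identify the time-invariant vectors $\mathbf{x}$ that are compatible with this constraint. Setting $\lambda_{\mathbf{x}} = -\log(1-p)$ in \eqref{eqn:lambda_d} yields $p_d = 1 - (1-p)^d$. Observe that $p_d$ is trivially upper-bounded by $1 - (1-p)^d$, the probability that at least one of the $d$ slots in the block is unerased, since decoding the first unseen packet requires at least one successful reception. Hence the $\lambda$-constraint forces $p_d$ to attain this largest possible value.

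Next, I would characterize the schemes that saturate this upper bound. The key claim is that $p_d = 1-(1-p)^d$ forces the event ``the first unseen packet $s_r$ is decoded in the block'' to coincide, up to a probability-zero set, with ``at least one slot in the block is successful''. In particular, any single successful slot, considered in isolation, must already decode $s_r$. But a successful slot that transmits a random combination of $i \geq 2$ unseen packets produces only one linear equation in $i$ unknowns, which by genericity over a large coefficient field cannot isolate $s_r$. Hence every slot in the block must be an $x_1$-slot that sends a copy of $s_r$ alone, forcing $\mathbf{x} = [d, 0, \ldots, 0]$. Under the equivalence transformation of \Cref{rem:uniqueness_of_schemes}, this vector sits in its own singleton equivalence class (there is no index $i$ with $x_i=0$ and $x_{i+1}\geq 1$), so the restriction really is unique.

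It then remains to evaluate $\tau_{\mathbf{x}}$ for the unique surviving scheme $\mathbf{x} = [d, 0, \ldots, 0]$. Since all $d$ slots carry the same packet $s_r$, at most one innovative packet can be received per block, so
\begin{equation*}
\expec[S_d] = \Pr(\text{at least one success in the block}) = 1 - (1-p)^d.
\end{equation*}
Plugging into \eqref{eqn:lambda_d} gives $\tau = (1-(1-p)^d)/d$, which simultaneously establishes achievability (via this scheme) and the converse (since no other time-invariant scheme meets the $\lambda$-constraint).

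The main obstacle, modest as it is, lies in making the genericity step of the second paragraph watertight: one must argue that over a sufficiently large coefficient field, a set of $n$ received combinations in $m$ unknowns determines $s_r$ with high probability if and only if $n \geq m$, so that a single equation in $i \geq 2$ unknowns genuinely fails to pin down $s_r$. This is already implicit in the standing assumption of \Cref{subsec:sys_model} that coefficients come from a field large enough to make coded combinations independent with high probability, so it does not require any fresh technique beyond quoting that standing assumption.
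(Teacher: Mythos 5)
Your proposal is correct and follows essentially the same route as the paper's proof: translate the constraint $\lambda = -\log(1-p)$ via \eqref{eqn:lambda_d} into $p_d = 1-(1-p)^d$, observe that only $\mathbf{x} = [d,0,\dots,0]$ attains this, and compute $\expec[S_d] = 1-(1-p)^d$ to get $\tau = (1-(1-p)^d)/d$. In fact your second paragraph supplies the justification for uniqueness (a lone successful slot carrying a combination of $i \geq 2$ unseen packets cannot decode $s_r$) that the paper simply asserts.
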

\begin{proof}
If we want to achieve $\lambda = -\log(1-p)$, we require $p_d$ in \eqref{eqn:lambda_d} to be equal to $1-(1-p)^d$. The only scheme that can achieve this is $\mathbf{x} = [d, 0, \cdots, 0]$, where we transmit $d$ copies of the first unseen packet. The number of innovative packets $S_d$ received in every block is $1$ with probability $1-(1-p)^d$, and zero otherwise. Hence, the best achievable throughput is $\tau = (1-(1-p)^d)/d$ with optimal $\lambda = -\log(1-p)$.
\end{proof}

This result gives us insight on how much bandwidth (which is proportional to $1/\tau$) is needed for a highly delay-sensitive application that needs $\lambda$ to be as large as possible. 

\begin{clm}[Cost of  Optimal Throughput $\tau$]
\label{clm:cost_of_opt_tau}
With block-wise feedback after every $d$ slots, and throughput $\tau = p$, the best achievable inter-delivery exponent is $\lambda = -\log(1-p)/d$. 
\end{clm}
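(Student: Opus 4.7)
The plan is to characterize the time-invariant schemes that achieve $\tau = p$, exhibit the scheme $\mathbf{x} = (1, 1, \ldots, 1)$ as achieving the claimed $\lambda$, and then show no other $\tau = p$ scheme can do better.

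First I would identify the $\tau = p$ schemes. From $\tau_{\mathbf{x}} = \expec[S_d]/d$, and since $S_d$ is upper bounded by both the number of successful slots in the block (mean $pd$) and the rank of the transmit matrix, $\tau = p$ forces the $d$ transmitted combinations per block to be linearly independent almost surely. With the coefficients drawn from a sufficiently large field (as assumed in the paper), and given that a type-$i$ row lies in $V_i := \mathrm{span}(s_r, \ldots, s_{r+i-1})$, this rank-$d$ condition is equivalent to the partial-sum constraint $\sum_{j \leq k} x_j \leq k$ for every $k = 1, \ldots, d-1$ (together with $\sum_j x_j = d$).

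Next I would verify that $\mathbf{x} = (1, 1, \ldots, 1)$ achieves $\lambda = -\log(1-p)/d$. For this scheme the per-type reception counts $Y_j$ are i.i.d.\ Bernoulli with parameter $p$. The first unseen packet $s_r$ is decoded by the end of the block iff $e_1$ lies in the row span of the received matrix, and with generic coefficients this in turn happens iff $\sum_{j \leq i} Y_j \geq i$ for some $i \in \{1, \ldots, d\}$. Because each $Y_j$ is bounded by $1$, the partial sum attains $i$ only when $Y_1 = \cdots = Y_i = 1$, so the event collapses to $\{Y_1 = 1\}$; hence $p_d = p$ and \eqref{eqn:lambda_d} gives $\lambda_{\mathbf{x}} = -\log(1-p)/d$.

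Finally I would establish the matching upper bound. For any $\tau = p$ scheme, let $\mathcal{I} = \{i : \sum_{j \leq i} x_j = i\}$, which contains $d$ and hence is nonempty. For $i \notin \mathcal{I}$ the event $\{\sum_{j \leq i} Y_j \geq i\}$ is impossible, since $Y_j \leq x_j$ forces $\sum_{j \leq i} Y_j \leq \sum_{j \leq i} x_j < i$; for $i \in \mathcal{I}$ the event equals $\{Y_j = x_j \text{ for all } j \leq i\}$ with probability $p^i$. These events are nested along $\mathcal{I}$, so $p_d = p^{i_0}$ where $i_0 = \min \mathcal{I} \geq 1$, giving $p_d \leq p$ and $\lambda_{\mathbf{x}} \leq -\log(1-p)/d$. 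The main obstacle in this sketch is the row-span characterization---showing that, over a sufficiently large field, $e_1$ lies in the received row span iff the partial sums $\sum_{j \leq i} Y_j$ meet the staircase $i$---which relies on a genericity argument about the structured random subspaces induced by the type-$i$ sparsity pattern.
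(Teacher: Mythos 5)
Your proof is correct, and at the top level it follows the same route as the paper's: identify which time-invariant schemes achieve $\tau=p$, compute $p_d$ for the best of them, and read $\lambda$ off \eqref{eqn:lambda_d}. Where you genuinely diverge is the converse. The paper's proof simply asserts that $\mathbf{x}=[1,0,\dots,0,d-1]$ (equivalently, via Remark~\ref{rem:uniqueness_of_schemes}, your $[1,1,\dots,1]$) is the \emph{only} time-invariant scheme guaranteeing an innovative packet in every successful slot. Taken literally that is not true: any $\mathbf{x}$ satisfying your staircase condition $\sum_{j\le k}x_j\le k$ for all $k$ yields $d$ generically independent combinations per block and hence $\tau=p$ --- e.g.\ $[0,0,2,2]$ for $d=4$, which has $x_1=0$ and is not reachable by the Remark's transformation. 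Your argument repairs this gap: you characterize the full set of $\tau=p$ schemes and then show each has $p_d=p^{i_0}\le p$ with $i_0=\min\{i:\sum_{j\le i}x_j=i\}$ (a set that contains $d$, so $i_0$ is well defined), so none can beat $-\log(1-p)/d$, with equality exactly when $x_1=1$. The step you flag as the obstacle --- that $e_1$ lies in the received row span iff $\sum_{j\le i}Y_j\ge i$ for some $i$ --- is indeed the technical crux, but it is a routine dimension count for generic vectors drawn from the nested subspaces $V_1\subset\cdots\subset V_d$: at the first $i$ where the partial sum reaches $i$ the received vectors span all of $V_i\ni e_1$, and if the staircase is never reached the vectors inside each $V_i$ span a generic proper subspace missing $e_1$. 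In short, your write-up is a more careful version of the paper's argument; the only thing I would add is a sentence noting that innovativeness with respect to combinations carried over from earlier blocks is automatic, since those new combinations are supported only on unseen packets while the knowledge space has pivots at seen indices.
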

\begin{proof}
If we want to achieve $\tau = p$, we need to guarantee an innovation packet in every successful slot. The only time invariant scheme that ensures this is $\mathbf{x} = [ 1,0, \cdots, 0, d-1]$, or its equivalent vectors $\mathbf{x}$ as given by Remark~\ref{rem:uniqueness_of_schemes}. With $\mathbf{x} =  [ 1,0, \cdots, 0, d-1]$, the probability of decoding the first unseen packet is $p_d = p$. Substituting this in \eqref{eqn:lambda_d} we get $\lambda =-\log(1-p)/d$, the best achievable $\lambda$ when $\tau = p$.
\end{proof}

Tying back to Fig.~\ref{fig:inst_fb_no_fb_tradeoff}, Claim~\ref{clm:cost_of_opt_lambda} and Claim~\ref{clm:cost_of_opt_tau} correspond to moving leftwards and downwards along the dashed lines from the optimal trade-off $(p, -\log(1-p))$. From Claim~\ref{clm:cost_of_opt_lambda} and Claim~\ref{clm:cost_of_opt_tau} we see that both $\tau$ and $\lambda$ are $\Theta(1/d)$, keeping the other metric optimal. 


%
For any given throughput $\tau$, our aim is to find the coding  scheme that maximizes $\lambda$. We first prove that any convex combination of achievable points $(\tau, \lambda)$ can be achieved. 

\begin{lem}[Combining of Time-invariant Schemes]
\label{lem:interpolate_bw_schemes}
By randomizing between time-invariant schemes $\mathbf{x}^{(i)}$ for $1 \leq i \leq B$, we can achieve the throughput-smoothness trade-off given by any convex combination of the points $(\tau_{\mathbf{x}^{(i)}},\lambda_{\mathbf{x}^{(i)}})$.  
\end{lem}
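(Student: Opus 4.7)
The plan is to construct the combining scheme via deterministic time-sharing at the super-block level, and then verify that both the throughput and the inter-delivery exponent reduce to the claimed convex combination. Fix weights $\alpha_1, \ldots, \alpha_B \ge 0$ with $\sum_i \alpha_i = 1$ and choose an integer $M$ large enough that each $M_i := \alpha_i M$ is a non-negative integer (for irrational $\alpha_i$, approximate by rational weights with rounding error $O(1/M)$ that vanishes as $M \to \infty$). Partition the time axis into super-blocks of $M$ consecutive blocks (i.e., $Md$ slots); within each super-block, deterministically run scheme $\mathbf{x}^{(1)}$ for the first $M_1$ blocks, then $\mathbf{x}^{(2)}$ for the next $M_2$ blocks, and so on through $\mathbf{x}^{(B)}$ for $M_B$ blocks.

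For the throughput, each super-block contributes $\sum_i M_i \expec[S_d^{(i)}] = M d \sum_i \alpha_i \tau_{\mathbf{x}^{(i)}}$ innovative packets in expectation, by linearity and the i.i.d.\ channel. Normalizing by the $Md$ slots per super-block gives $\sum_i \alpha_i \tau_{\mathbf{x}^{(i)}}$ per slot; the strong law of large numbers, applied to the i.i.d.\ super-block counts, then promotes this to $I_n/n \to \sum_i \alpha_i \tau_{\mathbf{x}^{(i)}}$ in probability.

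For the inter-delivery exponent, I would evaluate $\Pr(T_1 > n)$ along the subsequence $n = kMd$. The event $\{T_1 > n\}$ demands that no in-order packet is delivered in any of the first $kM$ blocks. Since the erasure channel is i.i.d.\ across slots and $p_d^{(i)}$ depends only on $\mathbf{x}^{(i)}$ and $p$ (not on history), the per-block ``first-unseen-not-decoded'' events are mutually independent, yielding
\[
\Pr(T_1 > kMd) \;\le\; \prod_{i=1}^{B} \bigl(1 - p_d^{(i)}\bigr)^{k M_i}.
\]
Taking $-\log$ and using $\lambda_{\mathbf{x}^{(i)}} = -\tfrac{1}{d}\log(1 - p_d^{(i)})$ together with $M_i/M = \alpha_i$ gives $-\tfrac{1}{kMd}\log \Pr(T_1 > kMd) \ge \sum_i \alpha_i \lambda_{\mathbf{x}^{(i)}}$. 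For general $n = kMd + r$ with $0 \le r < Md$, the missing-block correction contributes $O(1)$ to $-\log \Pr$, which vanishes after division by $n$, so $\liminf_{n \to \infty} -\tfrac{1}{n}\log \Pr(T_1 > n) \ge \sum_i \alpha_i \lambda_{\mathbf{x}^{(i)}}$.

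The main obstacle I anticipate is justifying the block-wise independence used above: coded combinations received under an earlier scheme can in principle combine with those of a later block to accelerate decoding, pushing the per-block probability strictly above $p_d^{(i)}$. Such cross-block help, however, only \emph{decreases} $\Pr(T_1 > n)$, preserving the inequality and hence the lower bound on $\lambda$; since achievability only requires lower bounds on $\tau$ and $\lambda$, this one-sided bookkeeping suffices. It is worth noting in passing why a naive per-block randomization (picking scheme $\mathbf{x}^{(i)}$ with probability $\alpha_i$ independently in each block) is \emph{insufficient}: it gives exponent $-\tfrac{1}{d}\log(1 - \sum_i \alpha_i p_d^{(i)})$, which by convexity of $p \mapsto -\log(1-p)$ is strictly smaller than the target $\sum_i \alpha_i \lambda_{\mathbf{x}^{(i)}}$. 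Deterministic time-sharing at the super-block level is precisely what allows the exponents to add linearly.
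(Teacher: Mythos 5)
Your proof is correct, but it takes a genuinely different route from the paper, and the difference is substantive. The paper randomizes the scheme choice i.i.d.\ per block and then computes the exponent by conditioning on the realized count $h$ of blocks using scheme $\mathbf{x}^{(1)}$, writing $\Pr(T_1>kd)=(1-p_{d_1})^h(1-p_{d_2})^{k-h}$ and invoking $h/k\to\mu$. As your closing remark correctly observes, this conflates the typical-realization (quenched) exponent with the exponent of the unconditional tail probability: averaging over the i.i.d.\ scheme choices gives exactly $\Pr(T_1>kd)=\bigl(\mu(1-p_{d_1})+(1-\mu)(1-p_{d_2})\bigr)^k$, whose exponent is $-\tfrac{1}{d}\log\bigl(1-\mu p_{d_1}-(1-\mu)p_{d_2}\bigr)$ and is strictly smaller than $\mu\lambda_{\mathbf{x}^{(1)}}+(1-\mu)\lambda_{\mathbf{x}^{(2)}}$ whenever $p_{d_1}\neq p_{d_2}$. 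Your deterministic super-block time-sharing makes the per-block failure probabilities multiply with the correct deterministic multiplicities, so the exponents add linearly with weights $\alpha_i$ and the claimed convex combination is actually attained; the paper's construction, read as stated, does not attain it. Your one-sided treatment of cross-block help (it can only decrease $\Pr(T_1>n)$ and increase $I_n$, so achievability lower bounds survive) and the $O(1)$ correction for $n$ not a multiple of $Md$ are both handled appropriately. In short, your argument is not just an alternative — it repairs a genuine gap in the paper's proof, at the mild cost of replacing ``randomizing'' in the lemma statement with deterministic time-sharing (or, equivalently, interpreting the exponent as the almost-sure exponent conditioned on the scheme sequence).
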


The proof of \Cref{lem:interpolate_bw_schemes} is deferred to the Appendix. The main implication of Lemma~\ref{lem:interpolate_bw_schemes} is that, to find the best $(\tau,\lambda)$ trade-off, we only have to find the points $(\tau_{\mathbf{x}} ,\lambda_{\mathbf{x}})$ that lie on the convex envelope of the achievable region spanned by all possible $\mathbf{x}$. 

For general $d$, it is hard to search for the $(\tau_{\mathbf{x}}, \lambda_{\mathbf{x}})$ that lie on the optimal trade-off. We propose a set of time-invariant schemes that are easy to analyze and give a good $(\tau, \lambda)$ trade-off. In Theorem~\ref{thm:gen_trade-off} we give the $(\tau, \lambda)$ trade-off for the proposed codes and show that for $d=2$ and $d=3$, it is the best trade-off among all time-invariant schemes.

\begin{defn}[Proposed Codes for general $d$]
\label{defn:gen_d_codes}
For general $d$, we propose using the time-invariant schemes with $x_1 = a$ and $x_{d-a+1} = d-a$, for $a = 1, \cdots d$.
\end{defn}

In other words, in every block of $d$ slots, we transmit the first unseen packet $a$ times, followed by $d-a$ combinations of the first $d-a+1$ unseen packets. These schemes span the $(\tau, \lambda)$ trade-off as $a$ varies from $1$ to $d$, with a higher value of $a$ corresponding to higher $\lambda$ and lower $\tau$. In particular, observe that the $a=d$ and $a=1$ codes correspond to codes given in the proofs of Claim~\ref{clm:cost_of_opt_lambda} and Claim~\ref{clm:cost_of_opt_tau}.

\begin{thm}[Throughput-Smoothness Trade-off for General $d$]
\label{thm:gen_trade-off}
The codes proposed in Definition~\ref{defn:gen_d_codes} give the trade-off points
\begin{align}
(\tau,\lambda) &=\left( \frac{1-(1-p)^a + (d-a)p}{d}, -\frac{a}{d} \log (1-p) \right).\label{eqn:close_to_optimal}
\end{align}
for $a = 1, \cdots d$. 
\end{thm}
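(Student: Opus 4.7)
The plan is to apply the general formula \eqref{eqn:lambda_d}, which reduces the theorem to verifying two identities for the proposed scheme $x_1=a$, $x_{d-a+1}=d-a$: namely $p_d = 1-(1-p)^a$ and $\expec[S_d] = 1-(1-p)^a + (d-a)p$. Substituting these into \eqref{eqn:lambda_d} immediately yields \eqref{eqn:close_to_optimal}. I would treat the two quantities separately, exploiting the two-phase structure of the block: $a$ copies of the first unseen packet $s_r$ in slots $1,\dots,a$, followed by $d-a$ independent random combinations of $s_r,s_{r+1},\dots,s_{r+d-a}$ in the remaining $d-a$ slots.

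For $p_d$, I would argue that $s_r$ is decoded within the block if and only if at least one of the first $a$ slots succeeds. One direction is immediate: any successful first-phase slot delivers $s_r$ verbatim. For the converse, if all $a$ copies of $s_r$ are erased, the only information about $s_r$ arriving in the block comes from the second phase, which produces at most $d-a$ linear equations in $d-a+1$ unknowns. Even if every one of those slots succeeds, the resulting system is strictly underdetermined and $s_r$ cannot be uniquely recovered. Hence $p_d=1-(1-p)^a$, and via \eqref{eqn:lambda_d} one obtains $\lambda_{\mathbf{x}}=-(a/d)\log(1-p)$.

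For $\expec[S_d]$, let $A\sim\mathrm{Bin}(a,p)$ and $B\sim\mathrm{Bin}(d-a,p)$ count the successful slots in the two phases. All $A$ first-phase successes carry the same equation, so they contribute $\mathbf{1}\{A\ge 1\}$ to the rank of the received equations. The $B$ second-phase successes are jointly linearly independent by construction, and stacked together with the first-phase equation they still lie in a subspace of dimension at most $d-a+1$; since $B+\mathbf{1}\{A\ge 1\}\le d-a+1$, no dependency is forced and the second phase contributes exactly $B$ innovative packets. Thus $S_d=\mathbf{1}\{A\ge1\}+B$ and $\expec[S_d]=\Pr(A\ge 1)+\expec[B]=1-(1-p)^a+(d-a)p$, completing the derivation upon substitution into \eqref{eqn:lambda_d}.

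The main subtlety, rather than a genuine obstacle, will be rigorously justifying the rank claim of the preceding paragraph: one must verify that the $d-a$ random combinations of $d-a+1$ unknowns, possibly augmented by the equation ``$s_r=$ value'', behave as generic linear forms of maximal rank rather than accidentally degenerating. This is a standard network-coding argument that holds with high probability over a sufficiently large coefficient field, an assumption already invoked in the preliminaries, so I would appeal to it rather than reprove it in detail.
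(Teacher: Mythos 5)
Your proposal follows essentially the same route as the paper's proof: compute $p_d = 1-(1-p)^a$ and $\expec[S_d] = 1-(1-p)^a+(d-a)p$ from the two-phase block structure and substitute into \eqref{eqn:lambda_d}. Your treatment is correct and in fact slightly more careful than the paper's, which simply asserts the rank/innovation claims that you justify via the underdetermined-system and generic-coefficient arguments.
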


\begin{proof}
To find the $(\tau, \lambda)$ trade-off points, we first evaluate $\expec[S_d]$ and $p_d$. With probability $1-(1-p)^a$ we get $1$ innovative packet from the first $a$ slots in a block. The number of innovative packets received in the remaining $d-a$ slots is equal to the number of successful slots. Thus, the expected number of innovative coded packets received in the block is
\begin{align}
\expec[S_d] = 1-(1-p)^a + (d-a) p \label{eqn:E_S_d_gen}
\end{align}
If the first $a$ slots in the block are erased, the first unseen packet cannot be decoded, even if all the other slots are successful. Hence, we have $p_d = 1-(1-p)^a$. 
Substituting $\expec[S_d]$ and $p_d$ in \eqref{eqn:lambda_d}, we get the trade-off in \eqref{eqn:close_to_optimal}.
\end{proof}

By Lemma~\ref{lem:interpolate_bw_schemes}, we can achieve any convex combination of the $(\tau, \lambda)$ points in \eqref{eqn:close_to_optimal}. In Lemma~\ref{lem:d_2_3_tradeoff} we show that for $d=2$ and $d=3$ this is the best trade-off among all time-invariant schemes. 
 
\begin{lem}
\label{lem:d_2_3_tradeoff}
For $d=2$ and $d=3$, the codes proposed in Definition~\ref{defn:gen_d_codes} give the best $(\tau, \lambda)$ trade-off among all time-invariant schemes. 
\end{lem}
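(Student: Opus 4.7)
The plan is to exhaust the finite family of time-invariant schemes for $d\in\{2,3\}$ and show that each non-proposed scheme's $(\tau_{\mathbf{x}},\lambda_{\mathbf{x}})$ lies on or below the upper envelope of the proposed codes. By \Cref{lem:interpolate_bw_schemes} and the monotonicity of $\lambda$ in $p_d$, it suffices to exhibit, for every time-invariant $\mathbf{x}$, a convex combination of proposed codes whose $\tau$ matches $\tau_{\mathbf{x}}$ and whose $\lambda$ is at least $\lambda_{\mathbf{x}}$.

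I first enumerate all compositions $\mathbf{x}=(x_1,\ldots,x_d)$ of $d$, and for each one compute $(\expec[S_d],p_d)$ by the same combinatorial case analysis used in the example after \eqref{eqn:lambda_d}. For $d=2$ the only candidates are $[2,0],[1,1],[0,2]$; the first two are proposed and $[0,2]$ gives $(\expec[S_d],p_d)=(2p,p^2)$, which is strictly dominated by $[1,1]=(2p,p)$. For $d=3$ I collapse $[2,0,1]\to[2,1,0]$ and $[1,1,1]\to[1,0,2]$ using \Cref{rem:uniqueness_of_schemes} (the second requires a direct calculation, since the remark applies only after a zero), leaving the non-proposed candidates $[0,2,1],[0,1,2],[0,0,3],[1,2,0],[0,3,0]$. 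The first three share $\tau=p$ with the proposed $a=1$ code $[1,0,2]$, and a quick count yields $p_d\in\{p^2,p^3,p^3\}$, all below $p$; these are therefore dominated. And $[0,3,0]$ is dominated at the same $\tau=p-p^3/3$ by $[1,2,0]$, since
\begin{equation}
(p+p^2-p^3)-(3p^2-2p^3)=p(1-p)^2\ge 0.
\end{equation}

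The single substantive case is $[1,2,0]$. Here I compute $\tau_{[1,2,0]}=p-p^3/3$ and factor
\begin{equation}
1-p_{d,[1,2,0]}=1-p-p^2+p^3=(1-p)^2(1+p),
\end{equation}
so $\lambda_{[1,2,0]}=-\tfrac{1}{3}[2\log(1-p)+\log(1+p)]$. I then interpolate between the proposed $a=1$ and $a=2$ codes with weight $\alpha=1-p$ on $a=1$; this matches $\tau=p-p^3/3$ and gives
\begin{equation}
\lambda_{\mathrm{mix}}=(1-p)\cdot\tfrac{-\log(1-p)}{3}+p\cdot\tfrac{-2\log(1-p)}{3}=-\tfrac{1+p}{3}\log(1-p).
\end{equation}
Dominance $\lambda_{\mathrm{mix}}\ge\lambda_{[1,2,0]}$ reduces to the scalar inequality
\begin{equation}
f(p)\,:=\,(1-p)\log(1-p)+\log(1+p)\,\ge\,0,\qquad 0\le p<1.\label{eqn:plan_ineq}
\end{equation}

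The main obstacle is \eqref{eqn:plan_ineq}; all other steps are bookkeeping. I would verify it by computing $f(0)=0$, $f'(p)=-\log(1-p)-1+1/(1+p)$ with $f'(0)=0$, and $f''(p)=\tfrac{1}{1-p}-\tfrac{1}{(1+p)^2}$, which is nonnegative on $[0,1)$ because $(1+p)^2=1+2p+p^2\ge 1\ge 1-p$. Hence $f'$ is nondecreasing and $f'(0)=0$ forces $f'\ge 0$, and $f(0)=0$ then yields $f\ge 0$. Combined with the equivalences from \Cref{rem:uniqueness_of_schemes} and the convexification of \Cref{lem:interpolate_bw_schemes}, this completes the argument that the proposed codes trace out the optimal $(\tau,\lambda)$ trade-off for $d=2$ and $d=3$.
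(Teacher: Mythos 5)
Your proof is correct and follows essentially the same route as the paper's: enumerate the finitely many time-invariant schemes for $d=2,3$, reduce via \Cref{rem:uniqueness_of_schemes} and direct domination to the single nontrivial competitor $[1,2,0]$, and show its $(\tau,\lambda)$ point lies on or below the chord joining the $a=1$ and $a=2$ proposed codes. The paper merely asserts this last step as a ``slope comparison''; your reduction to $(1-p)\log(1-p)+\log(1+p)\ge 0$ with the second-derivative argument, together with the explicit domination of the $x_1=0$ schemes (which \Cref{rem:uniqueness_of_schemes} does not actually cover, since it requires $x_1\ge 1$), supplies details the paper leaves implicit.
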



\begin{figure}[t]
\centering
\includegraphics[scale=0.45]{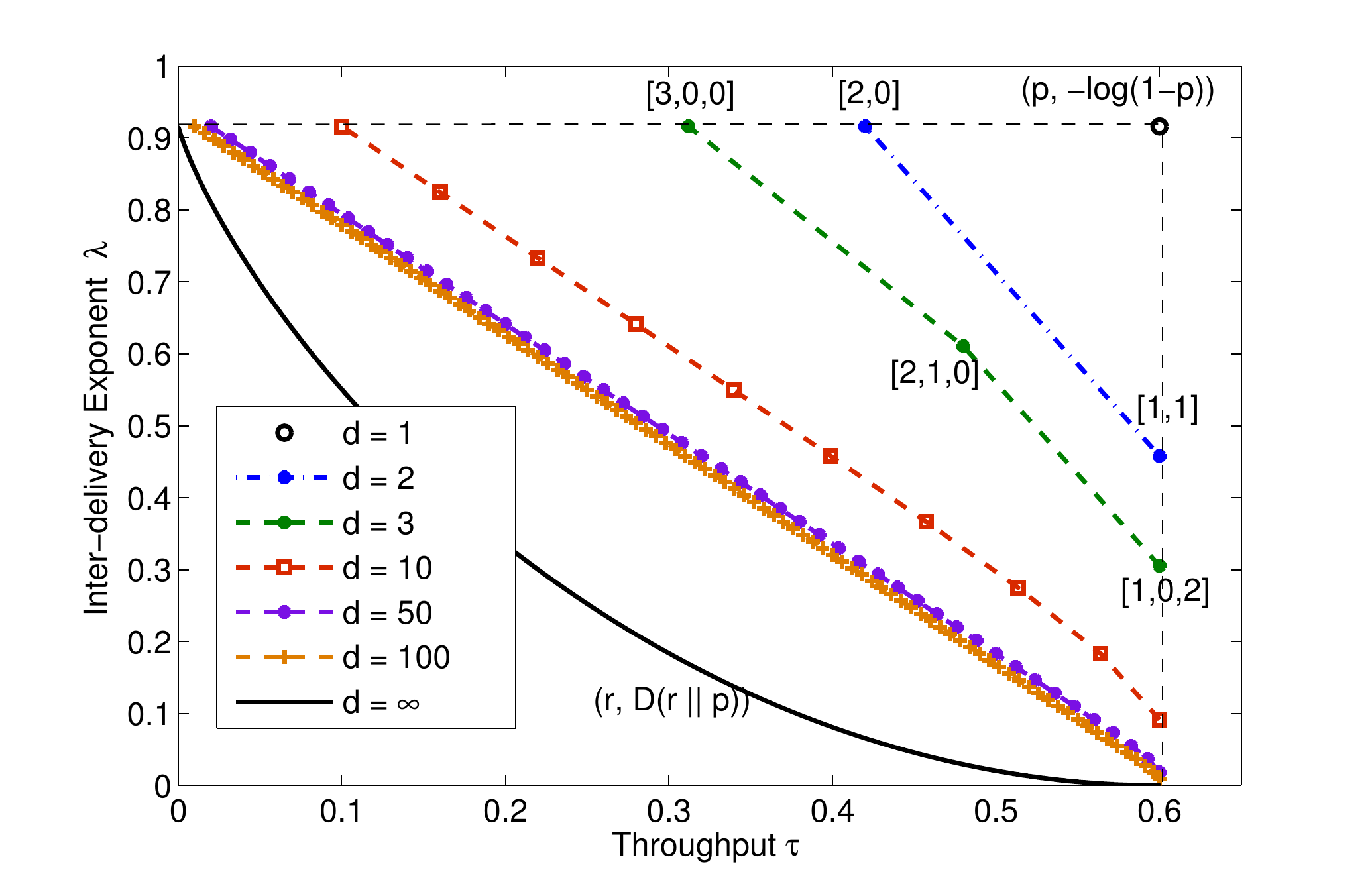}
\caption{The throughput-smoothness trade-off of the suggested coding schemes in Definition~\ref{defn:gen_d_codes} for $p=0.6$ and various values of block-wise feedback delay $d$. The trade-off becomes significantly worse as $d$ increases. The point labels on the $d=2$ and $d=3$ trade-offs are $\mathbf{x}$ vectors of the corresponding codes. }
\label{fig:close_to_optimal_tradeoff}
\vspace{-0.3cm}
\end{figure}

Fig.~\ref{fig:close_to_optimal_tradeoff} shows the trade-off given by \eqref{eqn:close_to_optimal} for different values of $d$. We observe that the trade-off becomes significantly worse as $d$ increases. Thus we can imply that frequent feedback to the source is important in delay-sensitive applications to ensure fast in-order delivery of packets. As $d \rightarrow \infty$, and $a = \alpha d$, the trade-off converges to $( (1-\alpha)p, -\alpha \log(1-p))$ for $ 0 \leq \alpha \leq 1$, which is the line joining $(0, -\log(1-p))$ and $(p, 0)$. It does not converge to the $(r, D(r \| p ))$ curve without feedback because we consider that $d$ goes to infinity slower than the $n$ used to evaluate the asymptotic exponent of $\Pr(T_1 > n)$.

By Lemma~\ref{lem:d_2_3_tradeoff} the proposed codes give the best trade-off among all time-invariant schemes. Numerical results suggest that even for general $d$ these schemes give a trade-off that is close to the best trade-off among all time-invariant schemes. 


Thus, in this section we analyzed how block-wise feedback affects the trade-off between throughput $\tau$ and inter-delivery exponent $\lambda$, which measures the burstiness in-order delivery in streaming communication. Our analysis gives us the insight that frequent feedback is crucial for fast in-order packet delivery. Given that feedback comes in blocks of $d$ slots, we present a spectrum of coding schemes that span different points on the $(\tau, \lambda)$ trade-off. Depending upon the delay-sensitivity and bandwidth limitations of the applications, these codes provide the flexibility to choose a suitable operating point on trade-off. 


\section{Multicast Streaming}
\label{sec:multicast}
\pdfoutput=1
In this section we move to the multicast streaming scenario. Since each user decodes a different set of packets, the in-order packet required by one user may be redundant to the other user, causing the latter to lose throughput if that packet is transmitted. In Section~\ref{subsec:struct_codes} we identify the structure of coding schemes required to simultaneously satisfy the requirements of multiple users. In Section~\ref{subsec:fixed_primary} we use this structure to find the best coding scheme for the two user case, where one user is always given higher priority. In Section~\ref{subsec:general_trade-off} we generalize this scheme to allow tuning the level of priority given to each user. The analysis of both these cases is based on a new Markov chain model of packet decoding. 

For this section, we focus on the case where each user provides immediate feedback to the source $(d=1)$. 
\begin{rem}
For the no feedback case $(d=\infty)$ we can extend Theorem~\ref{thm:opt_no_feedback} to show that the optimal throughput-smoothness trade-off for user $U_k$, $k = 1,2, \cdots K$ among full-rank codes is $(\tau_k, \lambda_k) = (r, D(r \| p_k))$, if $0 \leq r \leq p_k$. If $r > p_k$ then $\lambda_k = 0$ for user $U_k$. Since we are transmitting a common stream, the rate $r$ of adding new packets is same for all users.
\end{rem}

The general $d$ case is hard to analyze and open for future work. 

%
\subsection{Structure of Coding Schemes}
\label{subsec:struct_codes}
The best possible trade-off is $(\tau_i, \lambda_i) = (p_i, -\log(1-p_i))$, and it can be achieved when there is only one user, and the source uses a simple Automatic-repeat-request (ARQ) protocol where it keeps retransmitting the earliest undecoded packet until that packet is decoded. In this paper our objective is to design coding strategies to maximize $\tau$ and $\lambda$ for the two user case. For two or more users we can show that it is impossible to achieve the optimal trade-off $(\tau, \lambda) = (p_i, -\log(1-p_i))$ simultaneously for all users. We now present code structures that maximize throughput and inter-delivery exponent of the users. 
%
\begin{clm}[Include only Required Packets]
\label{clm:reqd_pkts_only}
In a given slot, it is sufficient for the source to transmit a combination of packets $s_{r_i}$ for $i \in \mathcal{I}$ where $\mathcal{I}$ is some subset of $\{1,2, \cdots K\}$. 
\end{clm}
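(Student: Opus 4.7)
My plan is to prove the claim by an exchange argument on the linear combination $v = \sum_j c_j s_j$ that the source transmits in a given slot: I will show that every coefficient $c_j$ with $j \notin \{r_1, \dots, r_K\}$ can be set to zero without decreasing any user's in-order delivery progress, leaving a combination supported on $\{s_{r_i} : i \in \mathcal{I}\}$ for some $\mathcal{I} \subseteq \{1, \dots, K\}$.

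The argument splits on the index $j$. First, if $j < \min_i r_i$, every user has already decoded $s_j$ and subtracts it from any received combination, so dropping $c_j s_j$ from $v$ leaves each user's effective combination unchanged; such terms cost nothing to delete. Second, if $j \geq \min_i r_i$ and $j \neq r_k$ for all $k$, I would partition the users into those with $r_i > j$ (who have decoded $s_j$ and subtract it out as before) and those with $r_i < j$ (for whom $s_j$ is still undecoded). For the latter group, the only potential contribution of $s_j$ is marking $s_j$ as ``seen''; it cannot help advance $r_i$ in the current slot, because advancing $r_i$ requires the post-cancellation combination to expose $s_{r_i}$, and this in turn requires a nonzero coefficient on $s_{r_i}$ in $v$ itself. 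Thus the inclusion of a non-required $s_j$ yields at best a purely forward-looking benefit.

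To dispose of this forward-looking benefit, I would couple the unrestricted policy $\pi$ to a restricted policy $\pi'$ whose transmitted combinations are supported only on required packets. The coupling keeps the channel erasure pattern identical; whenever $\pi$ inserts a non-required packet $s_j$ in some slot $n$, $\pi'$ substitutes either zero or another useful required-packet coefficient, and $s_j$ is included in $\pi'$'s transmission at the later slot at which $s_j$ first becomes some user's required packet. An induction on the slot index then shows that $\pi'$'s required-index vector $(r_1^{\pi'}, \dots, r_K^{\pi'})$ is componentwise at least as advanced as $\pi$'s, implying the same or better in-order delivery performance for every user. The main obstacle is the bookkeeping in this coupling: a single non-required $s_j$ may simultaneously affect the ``seen'' status of several users with different $r_i$, and the deferred transmission of $s_j$ under $\pi'$ must reproduce all of these seeing events without interfering with intervening slots. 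Handling this likely requires carefully tracking the linear spans of the received combinations at each user under both policies and exploiting the fact that, under immediate feedback, the source can identify exactly when each packet $s_j$ becomes relevant to some user.
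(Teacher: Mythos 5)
Your core move---an exchange argument that zeroes out every coefficient on a non-required packet, splitting on whether the index lies below every $r_i$ (already decoded by all, so free to delete) or is an out-of-order packet for some users---is exactly the idea in the paper's proof, which simply replaces the candidate $s_c$ by some required packet $s_{r_i}$ and asserts this lets more users decode in-order. Where you differ is that you explicitly isolate the step the paper glosses over: a non-required $s_j$ has a deferred, ``forward-looking'' value (it may be buffered and become useful once $s_j$ turns into someone's required packet), and you propose a slot-by-slot coupling with a restricted policy that retransmits $s_j$ later. The paper never confronts this; it jumps from ``the substitution helps more users now'' to ``hence a higher exponent,'' so your version is the more honest account of what actually needs to be shown.

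That said, your proposal as written has two soft spots. First, the assertion that advancing $r_i$ ``requires a nonzero coefficient on $s_{r_i}$ in $v$ itself'' is false for an arbitrary history: if a user is buffering undecoded combinations (e.g.\ it holds $s_2 \xor s_3$ with $r_i=2$), a new transmission of $s_3$ alone exposes $s_2$ by cancellation. You need either an inductive hypothesis that all prior slots already obeyed the restricted structure (under which every received combination is immediately reducible and no such buffered combinations exist), or you must phrase the step in terms of the span of received combinations rather than the support of $v$. Second, the coupling itself is only sketched, and you correctly identify the bookkeeping of multiple simultaneous ``seeing'' events as the obstacle without resolving it; so the proposal stops short of a complete proof. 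Since the paper's own argument is no more rigorous on precisely this point, I would not call this a failure relative to the paper, but you should either carry the coupling through or scope the claim (as the paper implicitly does) to the immediate-feedback setting where the induction on restricted histories closes the gap.
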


\begin{proof}[Proof]
Consider a candidate packet $s_c$ where $c \neq r_i$ for any $ 1 \leq i \leq K$. If $c < r_i$ for all $i$, then $s_c$ has been decoded by all users, and it need not be included in the combination. For all other values of $c$, there exists a required packet $s_{r_i}$ for some $i \in \{1,2, \cdots K\}$ that, if included instead of $s_c$, will allow more users to decode their required packets. Hence, including that packet instead of $s_c$ gives a higher exponent $\lambda$.
\end{proof}

\begin{clm}[Include only Decodable Packets]
\label{clm:inst_dec_only}
If a coded combination already includes packets $s_{r_i}$ with $ i \in \mathcal{I}$, and $U_j$, $j \notin I$ has not decoded all $s_{r_i}$ for $ i \in \mathcal{I}$, then a scheme that does not include $s_{r_j}$ in the combination gives a better throughput-smoothness trade-off than a scheme that does.
\end{clm}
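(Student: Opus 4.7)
The plan is to argue by state-equivalence that any scheme that includes $s_{r_j}$ in the combination can be replaced (without degrading throughput or inter-delivery exponent) by the scheme that omits it. Let $i^* \in \mathcal{I}$ satisfy $r_{i^*} = \max_{i \in \mathcal{I}} r_i$. The hypothesis that $U_j$ has not decoded some $s_{r_i}$ with $i \in \mathcal{I}$ forces $r_{i^*} \geq r_j$; the equality case is degenerate, since then $s_{r_j}$ is already in the combination, so I assume $r_{i^*} > r_j$.

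First I would compare the ``seen'' status updates made by the source for $C_A = \{s_{r_i} : i \in \mathcal{I}\} \cup \{s_{r_j}\}$ versus $C_B = \{s_{r_i} : i \in \mathcal{I}\}$. Under $C_A$ the maximum-index packet is still $s_{r_{i^*}}$ (since $r_j < r_{i^*}$), so by \Cref{defn:seen_pkts} only $s_{r_{i^*}}$ is newly marked as seen for any user that receives the slot; in particular $s_{r_j}$ is not marked as seen for $U_j$. Hence the source's bookkeeping after the slot is identical under the two schemes, and its future transmission decisions must coincide. Immediate decoding likewise matches: $U_{i^*}$ has already decoded every packet with index below $r_{i^*}$ (including $s_{r_j}$), so subtracting these from $C_A$ still yields $s_{r_{i^*}}$, the same outcome as $C_B$; any other $i \in \mathcal{I}$ with $r_i < r_{i^*}$ cannot decode their required packet from either combination, since both contain the unknown $s_{r_{i^*}}$; and $U_j$ cannot isolate $s_{r_j}$ from $C_A$ because the presence of $s_{r_{i^*}}$, which $U_j$ has not decoded, forces at least two unknowns in the received equation.

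The only genuine difference between the two schemes is that under $C_A$, $U_j$ stores one extra linear relation in the unknowns $s_{r_j}$ and $s_{r_{i^*}}$ (modulo already-decoded packets). I would then argue this relation cannot accelerate in-order delivery: the earliest undecoded packet for $U_j$ is $s_{r_j}$, and the stored relation contains $s_{r_j}$ together with $s_{r_{i^*}}$, so it does not on its own resolve $s_{r_j}$. Only after $s_{r_j}$ is decoded through a later transmission does the relation become useful, at which point $U_j$ may recover $s_{r_{i^*}}$; but $s_{r_{i^*}}$ still cannot be delivered until $s_{r_j+1},\ldots,s_{r_{i^*}-1}$ are all decoded in order, exactly the same constraint faced under $C_B$ (where $s_{r_{i^*}}$ is already known but equally undeliverable). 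The joint distribution of inter-delivery times is therefore identical, and the trade-off under $C_B$ is at least as good as under $C_A$.

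The main obstacle lies in this last step. While the ``seen'' bookkeeping and the immediate-decoding comparison are direct, formalizing that the extra linear equation held by $U_j$ under $C_A$ never translates into an earlier in-order delivery of any packet requires a careful slot-by-slot coupling of the two schemes, together with an induction on the evolution of each user's linear system and delivered/seen sets. Verifying that this coupling preserves the joint distribution of the first inter-delivery times $T_1$ for every user is the technical heart of the argument.
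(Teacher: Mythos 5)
There is a genuine gap, and it is in the direction of your argument rather than in its rigor. You set out to show the two combinations $C_A$ (with $s_{r_j}$) and $C_B$ (without) are \emph{state-equivalent}, so that the inter-delivery distributions coincide; but the claim asserts, and the paper's proof hinges on, a concrete asymmetry that your coupling erases. By hypothesis $U_j$ is missing at least one packet $s_{r_i}$, $i\in\mathcal{I}$. In the case where it is missing \emph{exactly one}, say $s_{r_{i^*}}$, the combination $C_B$ lets $U_j$ decode $s_{r_{i^*}}$ immediately (out of order, into its buffer), whereas $C_A$ presents $U_j$ with at least two unknowns ($s_{r_j}$ and $s_{r_{i^*}}$) and yields nothing decodable. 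This is not ``one extra stored linear relation under $C_A$'' versus nothing; it is ``a decoded packet under $C_B$'' versus ``an unusable relation under $C_A$.'' The decoded packet changes the source's knowledge of $U_j$'s state (the source tracks decoded packets, not just seen ones), so a later transmission of $s_{r_j}\xor s_{r_{i^*}}$ is immediately decodable by $U_j$ and delivers its in-order required packet — whereas under $C_A$ that same transmission duplicates the stored relation and gives $U_j$ nothing new. This is precisely the ``saves it from out-of-order packet decoding in a future slot'' mechanism in the paper's proof, and it is why exclusion is strictly \emph{better}, not merely no worse. Consequently your assertions that ``the source's bookkeeping after the slot is identical'' and that ``immediate decoding likewise matches'' are false.

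Two secondary problems compound this. First, you claim any $U_i$ with $i\in\mathcal{I}$ and $r_i<r_{i^*}$ cannot decode its required packet because $s_{r_{i^*}}$ is unknown to it — but $U_i$ may well have decoded $s_{r_{i^*}}$ out of order (that is the entire point of XOR-ing required packets, cf.\ the code structure of \Cref{prop:two_user_code_struct}), and if such a $U_i$ has \emph{not} decoded $s_{r_j}$, then including $s_{r_j}$ also destroys $U_i$'s immediate decoding, a second effect your equivalence misses. Second, you reduce the residual relation to the two unknowns $s_{r_j},s_{r_{i^*}}$, whereas $U_j$ may be missing several of the $s_{r_i}$, and you acknowledge that the slot-by-slot coupling that would close even the weaker ``at least as good'' statement is not carried out. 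The intended argument is much shorter: under either scheme the slot cannot deliver an in-order packet to $U_j$, so the comparison rests entirely on $U_j$'s out-of-order decoding, where exclusion strictly dominates.
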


\begin{proof}[Proof]
If $U_j$ has not decoded all $s_{r_i}$ for $ i \in \mathcal{I}$, the combination is innovative but does not help decoding an in-order packet, irrespective of whether $s_{r_j}$ is included in the combination. However, if we do not include packet $s_{r_j}$, $U_j$ may be able to decode one of the packets $s_{r_i}$, $i \in \mathcal{I}$, which can save it from out-of-order packet decoding in a future slot. Hence excluding $s_{r_j}$ gives a better throughput-smoothness trade-off.
\end{proof}

\begin{exmp}
Suppose we have three users $U_1$, $U_2$, and $U_3$. User $U_1$ has decoded packets $s_1$, $s_2$, $s_3$ and $s_5$, user $U_2$ has decoded $s_1$, $s_3$, and $s_4$, and user $U_3$ has decoded $s_1$, $s_2$, and $s_5$. The required packets of the three users are $s_4$, $s_2$ and $s_3$ respectively. By Claim~\ref{clm:reqd_pkts_only}, the optimal scheme should transmit a linear combination of one or more of these packets. Suppose we construct combination of $s_4$ and $s_2$ and want to decide whether to include $s_3$ or not. Since user $U_3$ has not decoded $s_4$, we should not include $s_3$ as implied by Claim~\ref{clm:inst_dec_only}. 
\end{exmp}

The choice of the initial packets in the combination is governed by a priority given to each user in that slot. Claims~\ref{clm:reqd_pkts_only} and~\ref{clm:inst_dec_only} imply the following code structure for the two user case.

\begin{prop}[Code Structure for the Two User Case]
\label{prop:two_user_code_struct}
Every achievable trade-off between throughput and inter-delivery exponent can be obtained by a coding scheme where the source transmits $s_{r_1}$, $s_{r_2}$ or the exclusive-or, $s_{r_1} \xor s_{r_2}$ in each slot. It transmits $s_{r_1} \xor s_{r_2}$ if and only if $r_1 \neq r_2$, and $U_1$ has decoded $s_{r_2}$ or $U_2$ has decoded $s_{r_1}$.
\end{prop}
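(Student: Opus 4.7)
The plan is to derive the proposition directly from \Cref{clm:reqd_pkts_only} and \Cref{clm:inst_dec_only} by case analysis on the subset $\mathcal{I} \subseteq \{1,2\}$ of required-packet indices included in the transmitted combination. By \Cref{clm:reqd_pkts_only}, without loss of generality every transmitted combination in a given slot is a linear combination of $\{s_{r_i} : i \in \mathcal{I}\}$ for some nonempty $\mathcal{I}$. There are thus only three cases: $\mathcal{I}=\{1\}$ gives a transmission proportional to $s_{r_1}$, $\mathcal{I}=\{2\}$ gives one proportional to $s_{r_2}$, and $\mathcal{I}=\{1,2\}$ gives $\alpha s_{r_1} + \beta s_{r_2}$ with $\alpha,\beta$ nonzero in the underlying field.

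Next I would argue that in the $\mathcal{I}=\{1,2\}$ case the combination is equivalent to $s_{r_1} \xor s_{r_2}$ for decoding purposes: any nonzero scalar rescaling does not change which source packets a given user can recover from the received symbol, so it suffices to fix the coefficients to $(1,1)$. Moreover, if $r_1=r_2$ then the combination collapses to a scalar multiple of the single packet $s_{r_1}$, already covered by the single-packet cases; hence the combined option is genuinely distinct only when $r_1 \neq r_2$.

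Finally, I would apply \Cref{clm:inst_dec_only} twice to pin down when it is actually beneficial to transmit the XOR rather than a single required packet. Viewing the combination as built by starting with $\mathcal{I}=\{1\}$ and adjoining $s_{r_2}$, the claim forbids this augmentation whenever $U_2$ has not decoded $s_{r_1}$; symmetrically, starting with $\mathcal{I}=\{2\}$ and adjoining $s_{r_1}$ is forbidden unless $U_1$ has decoded $s_{r_2}$. Consequently the XOR survives as an option in the optimal class if and only if at least one of these two decodability conditions holds, which, together with $r_1 \neq r_2$, gives exactly the ``if and only if'' condition in the statement.

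The main obstacle I anticipate is cleanly justifying the two ``without loss of generality'' reductions: first, that a nonzero rescaling does not affect the decoding graph at either user (and hence the trade-off $(\tau,\lambda)$), and second, that the two orderings in which \Cref{clm:inst_dec_only} may be invoked to build up $\mathcal{I}=\{1,2\}$ correspond to the two disjunctive clauses in the proposition. Both reductions are elementary, but they require being explicit about the fact that the trade-off depends only on the sequence of field spans of received combinations at each user, so that any optimal scheme can be transformed into one of the three canonical forms without degrading either $\tau$ or $\lambda$.
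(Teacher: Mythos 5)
Your proposal is correct and follows essentially the same route as the paper, which simply asserts that Claims~\ref{clm:reqd_pkts_only} and~\ref{clm:inst_dec_only} imply the proposition; you fill in the case analysis on $\mathcal{I}\subseteq\{1,2\}$ and the two orderings of applying Claim~\ref{clm:inst_dec_only} that the paper leaves implicit. The additional observations (scalar rescaling is immaterial, $r_1=r_2$ collapses to a single packet) are sound and consistent with the paper's intent.
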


In the rest of this section we analyze the two user case and focus on coding schemes as given by Proposition~\ref{prop:two_user_code_struct}.

\begin{figure}[t]
\small
\begin{center}
\begin{tabular}{ |c|c|c|c|}
  \hline
   Time & Sent &  $U_1$ & $U_2$\\
  \hline
     1 	&  $s_1$	& 	$s_1$ & \xmark \\ 
     2 	&  $s_2$	& 	\xmark & $s_2$ \\ 
     3 	&  $s_1 \xor s_2$	& 	$s_2$ & $s_1$ \\
     4 	&  $s_3$	& $s_3$ &\xmark\\
     5 	&  $s_4$	& $s_4$ &$s_4$ \\
   \hline
\end{tabular}
\caption{Illustration of the optimal coding scheme when the source always give priority to user $U_1$. The third and fourth columns show the packets decoded at the two users. Cross marks indicate erased slots for the corresponding user.\label{fig:fixed_primary_eg}}
\end{center}
\vspace{-0.65cm}
\end{figure}

\subsection{Optimal Performance for One of Two Users}
\label{subsec:fixed_primary}

In this section we consider that the source always gives priority to one user, called the primary user. We determine the best achievable throughput-smoothness trade-off for a secondary user that is ``piggybacking" on such a primary user. For simplicity of notation, let $a \triangleq p_1 p_2$, $b \triangleq p_1 (1-p_2)$, $c \triangleq(1-p_1) p_2$ and $d \triangleq (1-p_1) (1-p_2)$, the probabilities of the four possible erasure patterns. 

Without loss of generality, suppose that $U_1$ is the primary user, and $U_2$ is the secondary user. Recall that ensuring optimal performance for $U_1$ implies achieving $(\tau_1, \lambda_1) = (p_1, -\log(1-p_1))$. While ensuring this, the best throughput-smoothness trade-off for user $U_2$ is achieved by the coding scheme given by Claim~\ref{clm:fixed_prim} below.  

\begin{clm}[Optimal Coding Scheme]
\label{clm:fixed_prim}
A coding scheme where the source transmits $s_{r_1} \xor s_{r_2}$ if $U_2$ has already decoded $s_{r_1}$, and otherwise transmits $s_{r_1}$, gives the best achievable $(\tau_2, \lambda_2)$ trade-off while ensuring optimal $(\tau_1, \lambda_1)$.
\end{clm}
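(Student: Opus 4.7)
The plan is to prove the claim in two parts: (i) the proposed scheme attains the optimal pair $(\tau_1,\lambda_1) = (p_1,-\log(1-p_1))$ for the primary user $U_1$, and (ii) among all schemes attaining (i), the proposed one maximises the throughput-smoothness pair $(\tau_2,\lambda_2)$ of the secondary user $U_2$. Throughout, I would read the scheme's rule ``transmit $s_{r_1}\xor s_{r_2}$ if $U_2$ has decoded $s_{r_1}$'' as implicitly restricted --- via \Cref{clm:inst_dec_only} --- to states in which $U_1$ has also decoded $s_{r_2}$, so that the XOR is decodable by both users; the other cases revert to transmitting $s_{r_1}$ alone.

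For (i), I would argue by induction on the slot index that in every reachable state of the proposed scheme, the current transmission is decodable by $U_1$ into its required packet $s_{r_1}$. The $s_{r_1}$ case is immediate, and in the XOR case the implicit restriction guarantees that $U_1$ already holds $s_{r_2}$, allowing $U_1$ to recover $s_{r_1}$. Therefore $U_1$ decodes $s_{r_1}$ in every slot whose channel is not erased, so its first inter-delivery time is geometric with parameter $p_1$, giving $\tau_1 = p_1$ and $\lambda_1 = -\log(1-p_1)$ via \Cref{defn:inter_deli_exponent}.

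For (ii), I would work downward from the constraints that $U_1$-optimality places on admissible schemes. The requirement $\lambda_1 = -\log(1-p_1)$ forces $U_1$ to decode $s_{r_1}$ in every successful slot, so every transmission must be a combination from which $U_1$ can extract $s_{r_1}$. Together with \Cref{clm:reqd_pkts_only} and \Cref{prop:two_user_code_struct}, this narrows the per-slot choice to $s_{r_1}$ or $s_{r_1}\xor s_{r_2}$, the XOR option being available only when $U_1$ holds $s_{r_2}$. \Cref{clm:inst_dec_only} further rules out sending the XOR when $U_2$ lacks $s_{r_1}$. In the remaining states --- $U_2$ has $s_{r_1}$ and $U_1$ has $s_{r_2}$ --- sending $s_{r_1}$ alone is redundant for $U_2$, while the XOR yields $s_{r_2}$, which is $U_2$'s required in-order packet, on any successful slot of $U_2$; hence the proposed rule is at least slot-by-slot optimal for $U_2$.

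The main obstacle is converting this slot-by-slot preference into a global statement on $(\tau_2, \lambda_2)$. My plan is to couple the proposed scheme with any alternative $U_1$-optimal scheme on the same erasure realisations and show inductively that the set of in-order packets delivered to $U_2$ under the proposed rule first-order stochastically dominates that of the alternative. This yields $\tau_2$ at least as large and a stochastically smaller first inter-delivery time for $U_2$, hence $\lambda_2$ at least as large by \Cref{defn:inter_deli_exponent}. The delicate part is ruling out the possibility that an alternative scheme trades a current in-order delivery to $U_2$ for a more favourable future joint state; I would address this either by an exchange argument on consecutive slots, or more systematically by a monotonicity argument on the joint Markov chain whose full development is deferred to the general analysis later in \Cref{sec:multicast}.
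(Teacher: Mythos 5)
Your proposal is correct and follows essentially the same route as the paper: optimality for the primary user forces $s_{r_1}$ into every transmitted combination, and Claims~\ref{clm:reqd_pkts_only} and~\ref{clm:inst_dec_only} (via Proposition~\ref{prop:two_user_code_struct}) then dictate that the only remaining, and beneficial, choice is to XOR in $s_{r_2}$ exactly when $U_2$ already holds $s_{r_1}$. The ``delicate'' coupling/exchange step you flag for upgrading slot-by-slot optimality to a global statement about $(\tau_2,\lambda_2)$ is not carried out in the paper either --- its proof is a two-line appeal to Proposition~\ref{prop:two_user_code_struct} --- so your sketch is, if anything, more explicit about where the real work lies.
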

\begin{proof}
Since $U_1$ is the primary user, the source must include its required packet $s_{r_1}$ in every coded combination. By Proposition~\ref{prop:two_user_code_struct}, if the source transmits $s_{r_1} \xor s_{r_2}$ if $U_2$ has already decoded $s_{r_1}$, and transmits $s_{r_1}$ otherwise, we get the best achievable throughput-smoothness trade-off for $U_2$.
\end{proof}
Fig.~\ref{fig:fixed_primary_eg} illustrates this scheme for one channel realization. 

Packet decoding at the two users with the scheme given by Claim~\ref{clm:fixed_prim} can be modeled by the Markov chain shown in Fig.~\ref{fig:fixed_primary_markov}.  The state index $i$ can be expressed in terms of the number of gaps in decoding of the users, defined as follows. 

\begin{defn}[Number of Gaps in Decoding]
The number of gaps in $U_i$'s decoding is the number of undecoded packets of $U_i$ with indices less than $r_{\max} = \max_i r_i$. 
\end{defn}

In other words, the number of gaps is the amount by which a user $U_i$ lags behind the user that is leading the in-order packet decoding. The state index $i$, for $i \geq -1$ is equal to the number of gaps in decoding at $U_2$, minus that for $U_1$. Since the source gives priority to $U_1$, it always has zero gaps in decoding, except when there is a $c = p_2(1-p_1)$ probability erasure in state $0$, which causes the system goes to state $-1$. 
The states $i'$ for $i\geq 1$ are called ``advantage" states and are defined as follows.
\begin{defn}[Advantage State]
\label{defn:adv_state}
 The system is in an advantage state when $r_1 \neq r_2$, and $U_2$ has decoded $s_{r_1}$ but $U_1$ has not. 
\end{defn}
 \begin{figure}[t]
\centering
\includegraphics[width=2.5in]{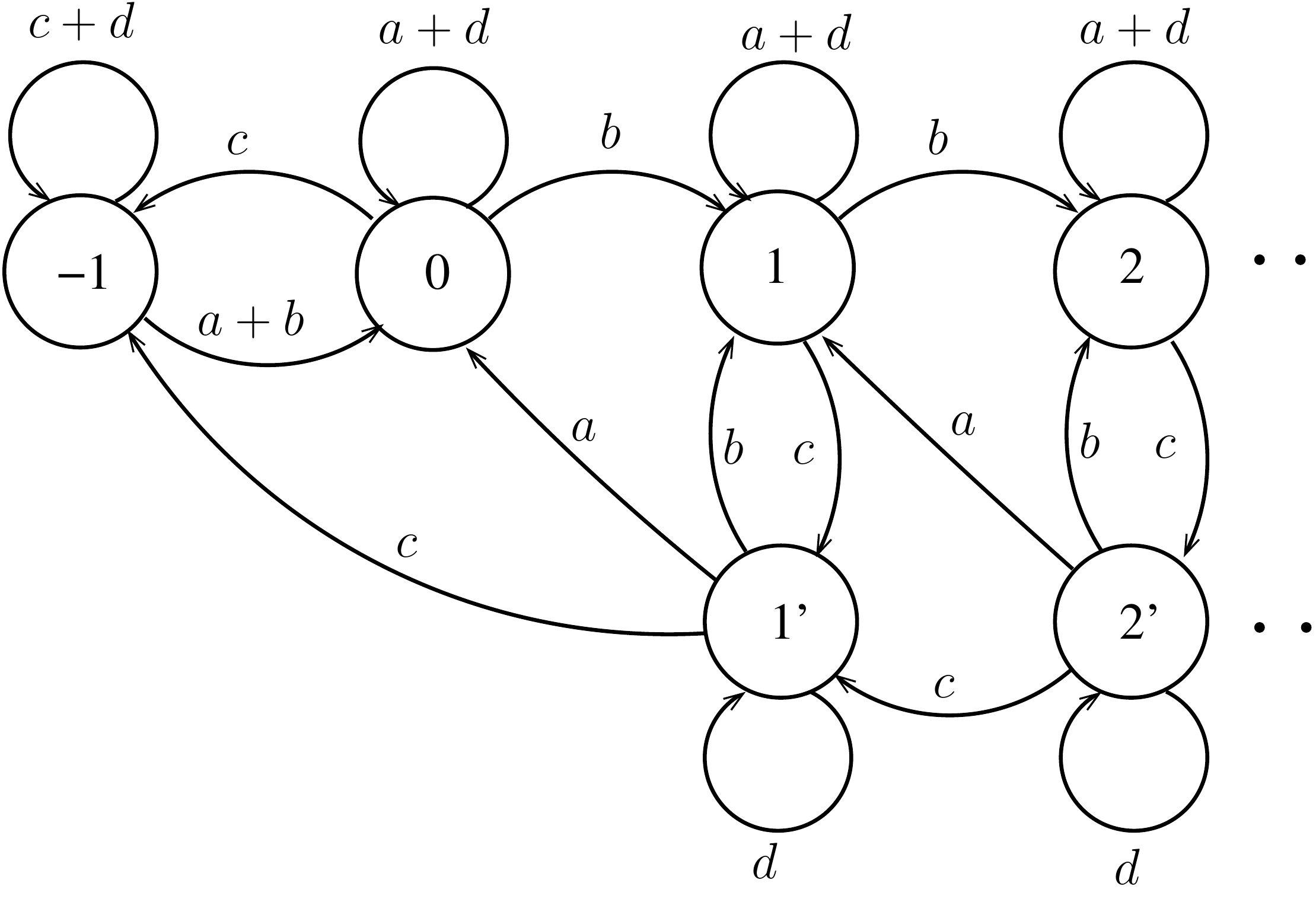}
\caption{Markov chain model of packet decoding with the coding scheme given by Claim~\ref{clm:fixed_prim}, where $U_1$ is the primary user. The state index $i$ represents the number of gaps in decoding of $U_2$ minus that for $U_1$. The states $i'$ are the advantage states where $U_2$ gets a chance to decode its required packet.\label{fig:fixed_primary_markov}}
\vspace{-0.6cm}
\end{figure}
By Claim~\ref{clm:fixed_prim}, the source transmits $s_{r_1} \xor s_{r_2}$ when the system is in an advantage state $i'$, and it transmits $s_{r_1}$ when the system is in state $i$ for $i \geq -1$. We now describe the state transitions of this Markov chain. First observe that with probability $d= (1-p_1)(1-p_2)$, both users experience erasures and the system transitions from any state to itself. When the system is in state $-1$, the source transmits $s_{r_1}$. Since $s_{r_1}$ has been already decoded by $U_2$, the probability $c=p_2(1-p_1)$ erasure also keeps the system in the same state. If the channel is successful for $U_1$, which occurs with probability $p_1 = a+b$, it fills its decoding gap and the system goes to state $0$.

The source transmits $s_{r_1}$ in any state $i$, $i \geq 1$. With probability $a = p_1 p_2$, both users decode $s_{r_1}$, and hence the state index $i$ remains the same. With probability $b = p_1(1-p_2)$, $U_1$ receives $s_{r_1}$ but $U_2$ does not, causing a transition to state $i+1$. With probability $c=(1-p_1) p_2$, $U_2$ receives $s_{r_1}$ and $U_1$ experiences an erasure due to which the system moves to the advantage state $i'$. When the system is an advantage state, having decoded $s_{r_1}$ gives $U_2$ an advantage because it can use $s_{r_1} \xor s_{r_2}$ transmitted in the next slot to decode $s_{r_2}$. From state $i'$, with probability $a$, $U_1$ decodes $s_{r_1}$ and $U_2$ decodes $s_{r_2}$, and the state transitions to $i-1$. With probability $c$, $U_2$ decodes $s_{r_2}$, but $U_1$ does not decode $s_{r_1}$. Thus, the system goes to state $(i-1)'$, except when $i=1$, where it goes to state $0$.

\begin{clm}
\label{clm:stability_fixed_primary_markov}
The Markov chain in Fig.~\ref{fig:fixed_primary_markov} is positive-recurrent and has unique steady-state distribution if and only if $b<c$, which is equivalent to $p_1 < p_2$.
\end{clm}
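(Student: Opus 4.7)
Proof proposal for Claim~\ref{clm:stability_fixed_primary_markov}.

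The plan is to reduce the analysis of the two--row chain on $\{-1,0,1,1',2,2',\dots\}$ to a nearest--neighbour random walk on the ``levels'' $\{-1,0,1,2,\dots\}$, where level $i\geq 1$ comprises the pair $\{i,i'\}$. The first step is to read off from the chain description in Fig.~\ref{fig:fixed_primary_markov} the one--step probabilities of leaving each state. From state $i$ (with $i\geq 1$) the chain leaves with probability $b+c$, going up to level $i+1$ with conditional probability $b/(b+c)$ and laterally to $i'$ with conditional probability $c/(b+c)$. From state $i'$ (with $i\geq 1$) the chain leaves with probability $a+c$, and every exit transition (to $i-1$ with probability $a/(a+c)$, to $(i-1)'$ or to $0$ with probability $c/(a+c)$) decreases the level by exactly one. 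Thus, tracking the embedded chain at successive level changes gives a birth--death walk on $\{-1,0,1,2,\dots\}$ with up--probability $b/(b+c)$ and down--probability $c/(b+c)$, reflected at $-1$ (since from state $-1$ the only outward transition is to state $0$).

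The second step is to invoke the standard classification of biased reflected random walks: such a walk is positive recurrent iff the down--probability strictly exceeds the up--probability, i.e.\ $c>b$. The sojourn--time correction to pass back from the level process to the original chain is harmless because the expected sojourn at every state is a uniformly bounded function of $a,b,c,d$ (in fact at most $1/\min(a+c,b+c)<\infty$), so finiteness of the mean return time on the level walk implies finiteness of the mean return time on the full chain, and conversely a null--recurrent or transient level walk forces the same for the original chain. A clean alternative, which I would actually write out, is Foster's theorem with the Lyapunov function $V(i)=i$, $V(i')=i-\epsilon$ for a suitable $\epsilon\in(b/c,\,1+c/a)$; a short calculation using $a+b+c+d=1$ shows that the one--step drifts at $i$ and at $i'$ are $b-c\epsilon$ and $-(a+c)+a\epsilon$ respectively, both of which can be made strictly negative iff $b<c$, and both inequalities are violated outside a finite exceptional set only under $b<c$.

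Once positive recurrence is established under $b<c$, irreducibility (which is immediate from Fig.~\ref{fig:fixed_primary_markov}: every state communicates with state $0$) upgrades it to the existence of a unique stationary distribution. For the converse, if $b\geq c$ then the level walk has non--negative drift, so from any finite set it hits arbitrarily large levels with positive probability and does not return in finite mean time, ruling out positive recurrence. The final step is the algebraic equivalence
\[
b<c \iff p_1(1-p_2)<(1-p_1)p_2 \iff p_1<p_2,
\]
which is a one--line cancellation of the $p_1 p_2$ terms.

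The only mildly delicate part is the passage between the level--process recurrence and the original chain: one must be sure that the aggregation $\{i,i'\}\mapsto i$ is a legitimate lumping for the purpose of recurrence, which is why I prefer the Foster--Lyapunov route with $V(i')=V(i)-\epsilon$ — it handles the asymmetry inside a level directly and avoids any subtlety about the lumpability of the chain.
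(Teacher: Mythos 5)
There is a genuine gap: your argument is built on a misreading of the transition structure of the chain in Fig.~\ref{fig:fixed_primary_markov}. From an advantage state $i'$ the source transmits $s_{r_1}\xor s_{r_2}$; with probability $b=p_1(1-p_2)$ user $U_1$ receives it (and decodes $s_{r_1}$, since it already holds $s_{r_2}$) while $U_2$ does not, and the chain moves to state $i$ --- it does not stay in $i'$. So the exit probability from $i'$ is $a+b+c=1-d$, not $a+c$, and not every exit lowers the level; this is precisely the $b\pi'_i$ inflow term in \eqref{eqn:steady_state_4} and the $(1-d)$ outflow in \eqref{eqn:steady_state_5}. Two consequences. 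First, the level process is not the birth--death walk you describe, and in fact is not lumpable at all: a downward exit from $i'$ lands in $(i-1)'$ with probability $c/(a+c)$, from which further downward moves are likely, so downward steps come in streaks. For $p_1=0.5$, $p_2=0.55$ one can check that, starting from state $i$, the first level change is more likely to be upward than downward, yet the chain is positive recurrent --- the naive level walk gives the wrong verdict even qualitatively. Second, your Lyapunov drift at $i'$ inherits the error: the correct value is $-(a+c)+\epsilon(a+b)$, not $-(a+c)+\epsilon a$. With your expression the admissible range is $\epsilon\in\bigl(b/c,\,(a+c)/a\bigr)$, which is nonempty whenever $ab<c(a+c)$ --- strictly weaker than $b<c$. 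For instance $p_1=0.6$, $p_2=0.5$ gives $a=b=0.3$, $c=d=0.2$, so $ab=0.09<0.1=c(a+c)$ although $b>c$: your argument would certify positive recurrence for a transient chain, so the claimed ``iff'' fails.

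The Foster route does survive once the transitions are corrected: requiring both drifts $b-c\epsilon$ and $-(a+c)+\epsilon(a+b)$ to be negative needs $\epsilon\in\bigl(b/c,\,(a+c)/(a+b)\bigr)$, and this interval is nonempty iff $b(a+b)<c(a+c)$, i.e.\ iff $(b-c)(a+b+c)<0$, i.e.\ iff $b<c$. You would still owe the converse, since Foster's theorem only gives sufficiency, and you cannot fall back on the level walk for it. The paper instead solves the global balance equations, reducing them to the recurrence $\pi_i=\frac{1-a-d}{c}\pi_{i-1}-\frac{b}{c}\pi_{i-2}$ with characteristic roots $1$ and $b/c$, so that $\pi_i/\pi_{i-1}=b/c$; summability then gives both directions at once and also yields the stationary distribution needed later in Lemma~\ref{lem:piggybacking_user_trade-off}.
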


\begin{lem}[Trade-off for the Piggybacking user]
\label{lem:piggybacking_user_trade-off}
When the source always gives priority to user $U_1$ it achieves the optimal trade-off $(\tau_1, \lambda_1) = (p_1,-\log(1-p_1))$. The scheme in Claim~\ref{clm:fixed_prim} gives the best achievable $(\tau_2, \lambda_2)$ trade-off for piggybacking user $U_2$. The throughput $\tau_2$ is given by
\begin{align}
\tau_2 &= p_1 \quad \text{if  } p_2 > p_1.\label{eqn:tau_fixed_prim}
\end{align}
If $p_2 \leq p_1$, $\tau_2$ cannot be evaluated using our Markov chain analysis. The inter-delivery exponent of $U_2$ for any $p_1$ and $p_2$ is given by
\begin{align}
 \lambda_2 &= -\log\left( \max \left(\frac{\left( 1-c+d + \sqrt{ (1-c+d)^2 + 4(bc+cd-d)} \right)}{2}, 1-p_1\right) \right)\label{eqn:lambda_fixed_prim}.
\end{align}
\end{lem}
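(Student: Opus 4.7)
The lemma asserts three things, which I address separately.

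Part (a), the optimality of $(\tau_1,\lambda_1)$ for $U_1$, is immediate from the construction: by \Cref{clm:fixed_prim}, every transmitted symbol is either $s_{r_1}$ or $s_{r_1}\oplus s_{r_2}$, and in the XOR case $U_1$ has already decoded $s_{r_2}$. Hence $s_{r_1}$ is the only unknown in any combination received by $U_1$, so $U_1$'s experience reduces to ARQ on its erasure channel and the pair $(\tau_1,\lambda_1)=(p_1,-\log(1-p_1))$ follows directly from \Cref{subsec:immediate_feedback}.

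Part (b), that $\tau_2=p_1$ when $p_2>p_1$, I derive from the Markov chain of \Cref{fig:fixed_primary_markov} together with \Cref{clm:stability_fixed_primary_markov}: positive recurrence (which holds iff $p_1<p_2$) yields a stationary distribution on the state index, so the lag $r_1-r_2$ is tight in steady state. Combined with $r_1/n\to p_1$ almost surely (from (a)), this forces $r_2/n\to p_1$, which is $\tau_2=p_1$ by \Cref{defn:thpt}. The lemma's caveat for $p_2\le p_1$ reflects the fact that the chain is then transient and $r_1-r_2$ drifts to infinity, so the stationary argument breaks down.

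Part (c), the formula for $\lambda_2$, is the technical core. I classify each transition of \Cref{fig:fixed_primary_markov} as \emph{decoding} (strictly advances $r_2$) or \emph{non-decoding}. The decoding transitions are precisely those in which $U_2$'s channel succeeds \emph{and} $U_2$ is in a position to recover its required packet: probabilities $a$ and $c$ out of state $0$, and probabilities $a$ and $c$ out of each advantage state $i'$. Removing these edges produces a substochastic kernel $Q$ on the state space $\{-1,0\}\cup\{i,i':i\ge 1\}$ with the property that $\Pr(T_1>n)$ equals the total mass of $Q^n$ applied to the initial distribution, so by a Perron--Frobenius argument $\lambda_2=-\log\rho(Q)$. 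The chain $Q$ decomposes (up to one-way coupling) into a $\{-1,0\}$ boundary piece and a shift-invariant bulk on $\{i,i'\}_{i\ge 1}$. The boundary contributes an eigenvalue $1-p_1$ from the self-loop at $-1$ (which has no decoding transitions out), together with a subdominant $d$ from the self-loop at $0$. For the bulk, the block-Toeplitz structure invites the geometric ansatz $v_i=\alpha z^i,\ v_{i'}=\beta z^i$; eliminating $\beta$ via the $v_{i'}$-equation and setting $z=1$ yields the quadratic
\[
\rho^2-(1-c+d)\,\rho+(d-bc-cd)=0,
\]
whose larger root matches the first argument of the $\max$ in~\eqref{eqn:lambda_fixed_prim}. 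Taking $\rho(Q)$ as the maximum of this root and $1-p_1$ gives~\eqref{eqn:lambda_fixed_prim}.

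The main obstacle I anticipate is justifying the choice $z=1$ in the bulk ansatz, i.e., showing that the spectral radius of the infinite block-tridiagonal $Q$ equals $\rho(A+B)$ with $A$ the within-pair block and $B$ the forward-pair block. This is the well-known identity $\rho(Q)=\max_{z\in(0,1]}\rho(A+zB)$ for Toeplitz-type operators with purely forward block-transitions, attained here at $z=1$; in our specific substochastic setting, verifying it cleanly requires either exhibiting a non-negative eigenvector at $z=1$ with appropriate behavior at the $\{-1,0\}$ boundary, or computing $Q^n$ directly using its block-upper-triangular structure on the pair index. Once this step is settled, the algebra leading to~\eqref{eqn:lambda_fixed_prim} is routine.
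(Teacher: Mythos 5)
Your proposal follows the same route as the paper for all three parts, and the one step you flag as an open obstacle is precisely where the paper's argument is cleaner. For part (a) the arguments coincide. For part (b) the paper computes $\tau_2=p_2(1-\pi_{-1})$ explicitly from the stationary distribution of \Cref{clm:stability_fixed_primary_markov} and verifies algebraically that it equals $p_1$, whereas you argue via tightness of the lag under positive recurrence; both are valid, and yours is arguably more conceptual (though you should note the complementary fact $\tau_2\le\tau_1$, since $U_2$ never decodes beyond $r_1$). For part (c) the paper works with the same killed/absorbing chain you describe, but it avoids the infinite block-Toeplitz spectral-radius question entirely by an exact lumping: all states $i$ (resp.\ $i'$) for $i\ge 1$ are fused into two superstates $I$ and $I'$. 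This is legitimate because the transition probabilities from state $i$ to the classes $\{$plain, advantage, decoding$\}$ do not depend on $i$, so the hitting time of the decoding event is identically distributed from every state in a class; moreover the boundary states $0$ and $-1$ are never re-entered from the bulk except through a decoding transition, so the resulting $5\times 5$ absorbing chain reproduces the law of $T_1$ exactly. Its second-largest eigenvalue is $\max\bigl(1-p_1,\,d,\,\xi\bigr)$ where $\xi$ is the larger root of the $2\times 2$ block on $\{I,I'\}$, whose characteristic polynomial is exactly your quadratic $\rho^2-(1-c+d)\rho+(d-bc-cd)=0$ (note $1-c+d=a+b+2d$ is the trace and $d(a+b+d)-bc$ the determinant of that block). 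So the "justify $z=1$" step you anticipate as the main difficulty is not needed: lumpability replaces the Perron--Frobenius argument on the infinite operator, and the boundary eigenvalues $1-p_1$ and $d$ come out of the block-triangular structure of the finite matrix rather than from a separate coupling argument. With that substitution your proof is complete and matches \eqref{eqn:lambda_fixed_prim}.
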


\begin{figure}[t]
\centering
\includegraphics[width=3.5in]{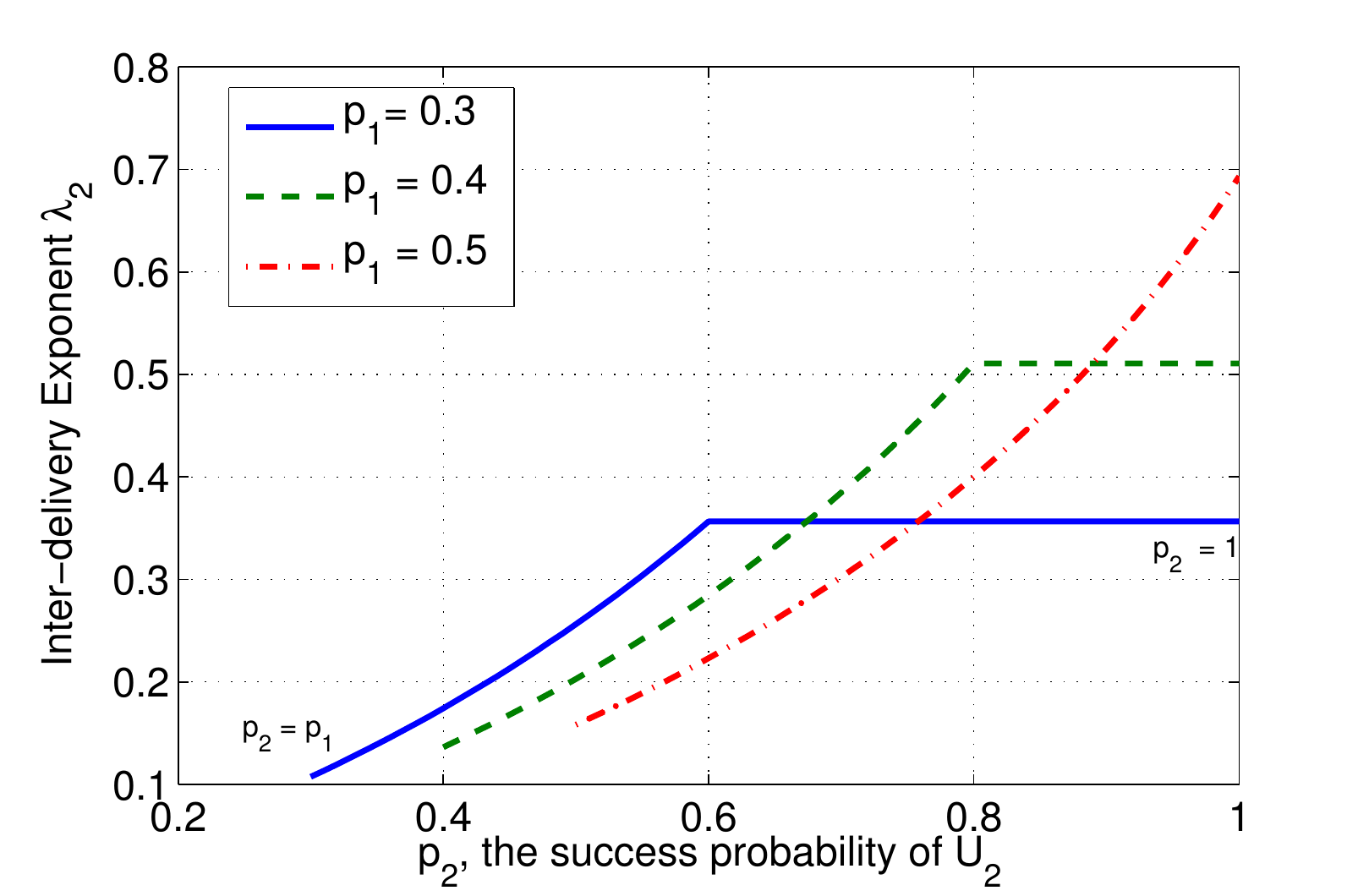}
\caption{Plots of the inter-delivery exponent $\lambda_2$ of the piggybacking user $U_2$, versus the success probability $p_2$ throughput $\tau_2$. The value of $p_2$ varies from $p_1$ to $1$ on each curve. The exponent saturates at $-\log(1-p_1)$, which is equal to $\lambda_1$, the exponent of the primary user $U_1$. \label{fig:thpt_in_order_fixed_markov}}
\end{figure}

In Fig.~\ref{fig:thpt_in_order_fixed_markov} we plot the inter-delivery exponent $\lambda_2$ versus $p_2$, which increases from $p_1$ to $1$ along each curve. The inter-delivery exponent $\lambda_2$ increases with $p_2$, but saturates at $- \log(1-p_1)$, the inter-delivery exponent $\lambda_1$ of the primary user. Since $U_2$ is the secondary user it cannot achieve faster in-order delivery than the primary user $U_1$.

%

%
\subsection{General Throughput-Smoothness Trade-offs}
\label{subsec:general_trade-off}
For the general case, we propose coding schemes that can be combined to tune the priority given to each of the two users and achieve different points on their throughput-smoothness trade-offs. 

Let $r_{\text{max}} = \max(r_1, r_2)$ and $r_{\text{min}} = \min (r_1 ,r_2)$, where $r_1$ and $r_2$ are the indices of the required packets of the two users. We refer to the user with the higher index $r_i$ as the leader(s) and the other user as the lagger. Thus, $U_1$ is the leader and $U_2$ is the lagger when $r_1 > r_2$. If $r_1 = r_2$, without loss of generality we consider $U_1$ as the leader. 

\begin{defn}[Priority-$(q_1, q_2)$ Codes]
\label{defn:priority_q1_q2}
If the lagger $U_i$ has not decoded packet $s_{r_{\text{max}}}$, the source transmits $s_{r_{\text{min}}}$ with probability $q_i$ and $s_{r_{\text{max}}}$ otherwise. If the lagger has decoded $s_{r_{\text{max}}}$ , the source transmits $s_{r_{\text{max}}} \xor s_{r_{\text{min}}}$.
\end{defn}

Note that the code given in Claim~\ref{clm:fixed_prim}, where the source always gives priority to user $U_1$ is a special case of priority-$(q_1, q_2)$ codes with $(q_1, q_2) = (1,0)$. Another special case is $(q_1, q_2) = (0,0)$ which is a greedy coding scheme that always favors the user which is ahead in in-order delivery. The greedy coding scheme ensures throughput optimality to both users, i.e.\ $\tau_1 = p_1$ and $\tau_2 = p_2$. 

\begin{rem}
A generalization of priority-$(q_1, q_2)$ codes is to consider priorities $q_1^{(i)}$ and $q_2^{(i)}$ that depend on the state $i$ of the Markov chain. A special case of this is $q_1^{(i)} = 1$ for all states $i \geq -M$, and $q_2^{(i)} = 1$ for all states $j \leq N$ for integers $M, N > 0$. This scheme corresponds to putting hard boundaries on both sides of the Markov chain, and was analyzed in \cite{gauri_netcod}.
\end{rem}

\begin{figure}
\centering
\includegraphics[width=4.5in]{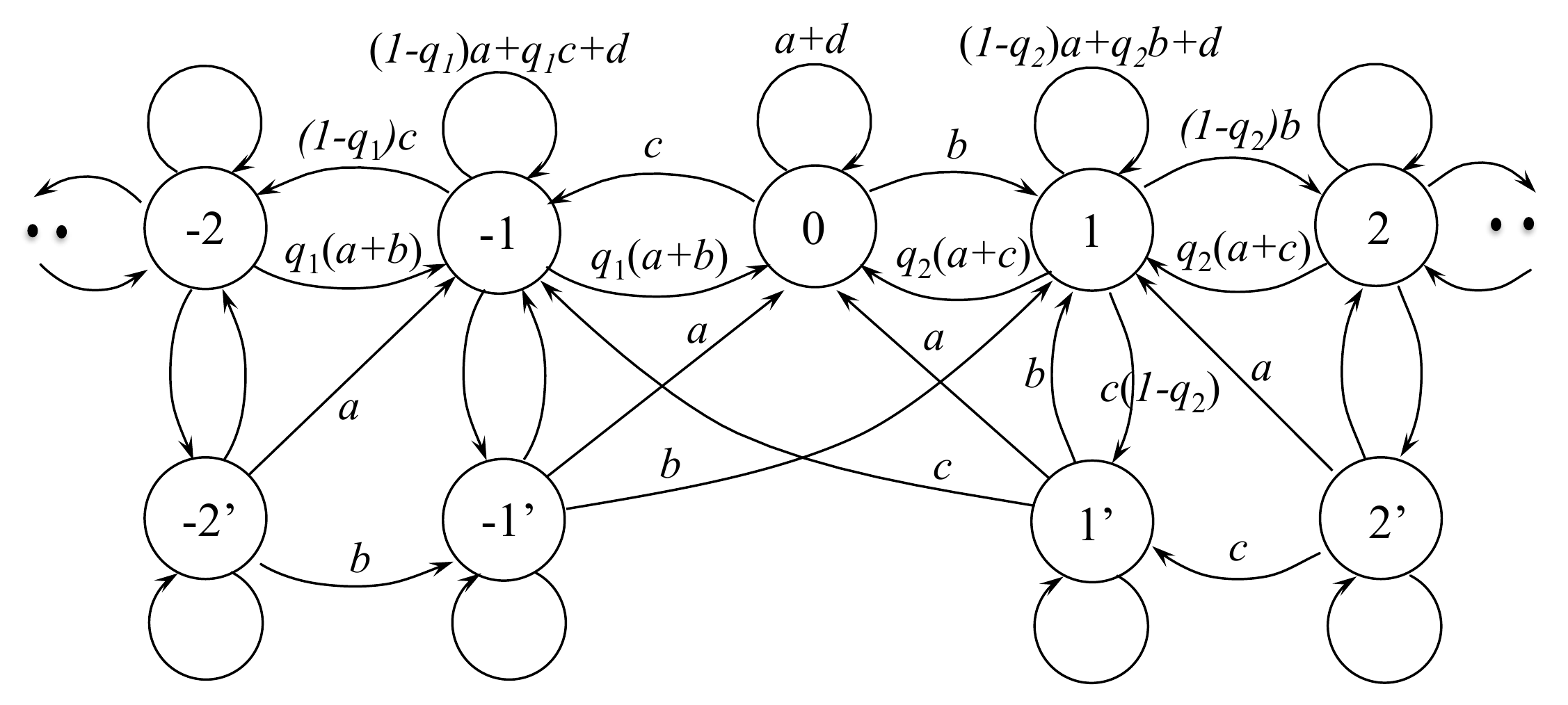}
\caption{ Markov chain model of packet decoding with the priority-$(q_1, q_2)$ coding scheme given by Definition~\ref{defn:priority_q1_q2}. The state index $i$ represents the number of gaps in decoding if $U_2$ compared to $U_1$ and $q_i$ is the probability of giving priority to the $U_i$ when it is the lagger, by transmitting its required packet $s_{r_i}$. and \label{fig:stabilized_markov}}
\end{figure}

The Markov model of packet decoding with a priority-$(q_1, q_2)$ code is as shown in Fig.~\ref{fig:stabilized_markov}, which is a two-sided version of the Markov chain in Fig.~\ref{fig:fixed_primary_markov}. Same as in Fig.~\ref{fig:fixed_primary_markov}, the index $i$ of a state $i$ of the Markov chain is the number of gaps in decoding of $U_2$ minus that for $U_1$. User $U_1$ is the leader when the system is in state $i \geq 1$ and $U_2$ is the leader when $ i \leq -1$, and both users are leaders when $i =0$. The system is in the advantage state $i'$ if packet is decoded by the lagger but not the leader. 

For simplicity of representation we define the notation $\bar{d} \triangleq 1-d$, $\bar{q_1} \triangleq 1-q_1$ and $\bar{q_2} \triangleq 1-q_2$. 
%

\begin{lem}[$(\tau, \lambda)$ Trade-offs with Priority-$(q_1, q_2)$ codes]
\label{lem:priority_q1_q2_trade-off}
Let $\mu = \pi_{i-1}/\pi_i$ for $i \leq -1$. Then the priority-$(q_1, q_2)$ codes given by Definition~\ref{defn:priority_q1_q2} give the following throughput for $U_2$.
\begin{align}
\tau_2 &= p_2 \left(1 - \frac{q_1 \pi_{-1}}{1-\mu}\right) \quad \text{if }{\mu < 1}  \label{eqn:tau_2_general}
\end{align}
 If $\mu > 1$ then $\tau_2$ cannot be evaluated using our Markov chain analysis. On the other hand, the inter-delivery exponent can be evaluated for any $\mu$ as given by
\begin{align}
\lambda_2 &= -\log \max \left( d+q_1 c + \bar{q_1} b, \frac{\left( 2d+\bar{q_2}a + b+ \sqrt{(2d+\bar{q_2}a +b)^2 - 4( d(b+d) +  \bar{q_2}(da-bc))} \right)}{2} \right)  \label{eqn:lambda_2_general}
\end{align}
Similarly, let $\rho = \pi_{i+1}/\pi_i$ for $ i \geq 1$. If $\rho < 1$, the expressions for throughput $\tau_1$ and inter-delivery exponent $\lambda_1$ of $U_1$ are same as \eqref{eqn:tau_2_general} and \eqref{eqn:lambda_2_general} with $b$ and $c$, and $q_1$ and $q_2$ interchanged, $\pi_{-1}$ replaced by $\pi_{1}$, and $\mu$ replaced by $\rho$.
\end{lem}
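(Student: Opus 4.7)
The plan is to analyse the Markov chain of \Cref{fig:stabilized_markov} by first tabulating, from every non-advantage state $i$ and every advantage state $i'$, the transition probabilities under the priority-$(q_1,q_2)$ rule combined with the four joint channel outcomes of probabilities $a,b,c,d$. The crucial piece of bookkeeping is to record, for each outgoing transition, whether it advances $r_2$ (a ``delivery'' for $U_2$) or not. A state-by-state check reveals that the transitions are translation-invariant for $|i|\ge 2$, with only boundary modifications near $i=0,\pm 1$; this is what will allow geometric ansatzes in each tail and make the ratio $\mu=\pi_{i-1}/\pi_i$ actually constant across $i\le -1$ (and similarly $\rho$ across $i\ge 1$), so that the definitions in the lemma statement are well-posed.

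For the throughput I would use a conservation argument. When $\mu<1$ the out-of-order buffer of $U_2$ has bounded mean, so the long-run rate of in-order delivery to $U_2$ equals the long-run rate at which $U_2$ receives an \emph{innovative} packet. $U_2$ receives a packet with probability $p_2$ per slot, and scanning the transitions shows that a reception is redundant to $U_2$ exactly when the chain is in a non-advantage state $i\le -1$ (so $U_2$ is the leader and has already decoded $s_{r_{\min}}=s_{r_1}$) and the source happens to transmit $s_{r_1}$, which has conditional probability $q_1$. The redundant-reception rate therefore equals $p_2 q_1 \sum_{i\le -1}\pi_i = p_2 q_1 \pi_{-1}/(1-\mu)$, giving \eqref{eqn:tau_2_general}. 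The same argument explains why $\mu\ge 1$ is excluded: the buffer grows unboundedly and the conservation identity fails.

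For the inter-delivery exponent I would use the spectral characterisation $\Pr(T_1>n)\doteq \rho(Q)^n$, where $Q$ is the substochastic kernel obtained by deleting all delivery transitions for $U_2$. The kernel $Q$ decouples asymptotically into two QBD chains, one per tail, each indexed by the level $|i|$ with an internal coordinate (non-advantage vs advantage). On the left tail I would take the ansatz $u_{-n}=\alpha^{n-1}$, $u_{(-n)'}=\beta\alpha^{n-1}$; the advantage-state bulk equation collapses to $\lambda=d+b/\alpha$, and combining this with the non-advantage bulk equation and the matrix-analytic pinning of a positive eigenvector forces $\alpha=b/(q_1 c+\bar{q_1} b)$, whence $\lambda_L=d+q_1 c+\bar{q_1} b$, which is the first term $A$ in \eqref{eqn:lambda_2_general}. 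On the right tail the analogous ansatz $u_i=\gamma^{i-1}$, $u_{i'}=\delta\gamma^{i-1}$ is genuinely two-dimensional because on a $b$ outcome the advantage state stays at the \emph{same} level rather than drifting; eliminating $\delta$ from the pair of bulk equations leaves a quadratic in $\lambda-d$, and its larger root is exactly the expression $B$ in \eqref{eqn:lambda_2_general}. Taking $\rho(Q)=\max(\lambda_L,\lambda_R)$ gives $\lambda_2$, and the symmetric expressions for $(\tau_1,\lambda_1)$ then follow immediately by relabeling users, which interchanges $(p_1,p_2)$ and hence $(b,c)$, $(q_1,q_2)$, $(\pi_{-1},\pi_1)$, and $(\mu,\rho)$.

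The hardest part will be the asymmetric delivery/no-delivery bookkeeping between the two tails. From the left-tail advantage state $(i)'$, $i\le -2$, a $b$ outcome sends the chain to $(i+1)'$ without delivery, so the advantage coordinate drifts toward $0$; whereas from the right-tail advantage state $(i)'$, $i\ge 1$, the same $b$ outcome sends the chain to the same-level non-advantage state because $U_1$ decodes a packet $U_2$ already has buffered. This asymmetry is precisely what makes the left-tail eigenvalue $A$ a scalar expression while the right-tail eigenvalue $B$ requires a true $2\times 2$ eigenvalue problem, and getting this case analysis right — for every state, for every channel outcome — is the step warranting the most care; once that is in place, the QBD algebra and the conservation argument are routine.
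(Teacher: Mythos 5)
Your proposal is correct and follows essentially the same route as the paper: geometric (QBD-type) tail ansatzes for the stationary ratios $\mu,\rho$, a loss-accounting argument for the throughput (the paper likewise subtracts the rate $p_2 q_1\sum_{i\le -1}\pi_i$ of redundant receptions), and the exponent as the spectral radius of the delivery-killed chain. Your two-tail decomposition with $\lambda_2=-\log\max(\lambda_L,\lambda_R)$ is just a repackaging of the paper's computation, which fuses each tail into one or two states around an absorbing state $F$ and reads off the second-largest eigenvalue of the resulting block-triangular $5\times 5$ matrix — the same scalar left-tail eigenvalue $d+q_1c+\bar{q_1}b$ and the same $2\times 2$ right-tail quadratic.
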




%

Fig.~\ref{fig:thpt_smoothness_q2_var} shows the throughput-smoothness trade-offs of the two users as $q_2$ varies, when $q_1 = 1$, $p_1 = 0.5$ and $p_2 = 0.4$. To stabilize the right-side of the Markov chain in Fig.~\ref{fig:stabilized_markov} for these parameters we require $q_2$ to be at least $0.25$. As $q_2$ increases from $0.26$ to $1$ in Fig.~\ref{fig:thpt_smoothness_q2_var} we observe that $U_2$ gains in smoothness, at the cost of the smoothness of $U_1$. Also, both users lose throughput when $q_2$ increases.

In Fig.~\ref{fig:p_1_eq_p2_q_1_eq_q2_var} we show the effect of increasing $q_1$ and $q_2$ simultaneously for different values of $p_1 = p_2$. As $q_1 = q_2$ increases, we get a better inter-delivery exponent for both users, but at the cost of loss of throughput. 
%

\begin{figure}[t]
\centering
\includegraphics[width=3.3in]{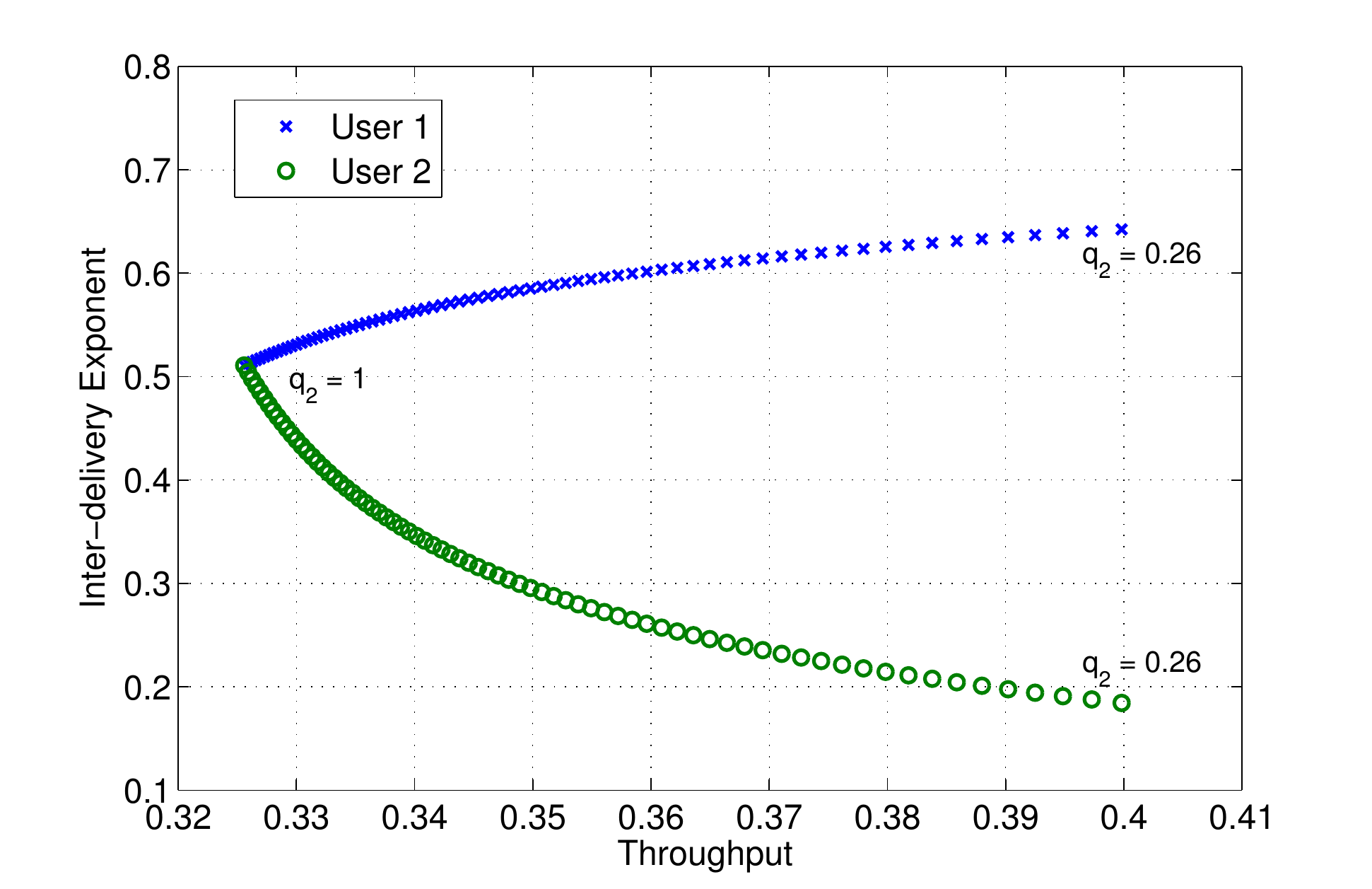}
\caption{Plot of the throughput-smoothness trade-off for $q_1 = 1$ and as $q_2$ varies. The success probabilities $p_1 = 0.5$ and $p_2 = 0.4$.\label{fig:thpt_smoothness_q2_var}}
\end{figure}

\begin{figure}[t]
\centering
\includegraphics[width=3.3in]{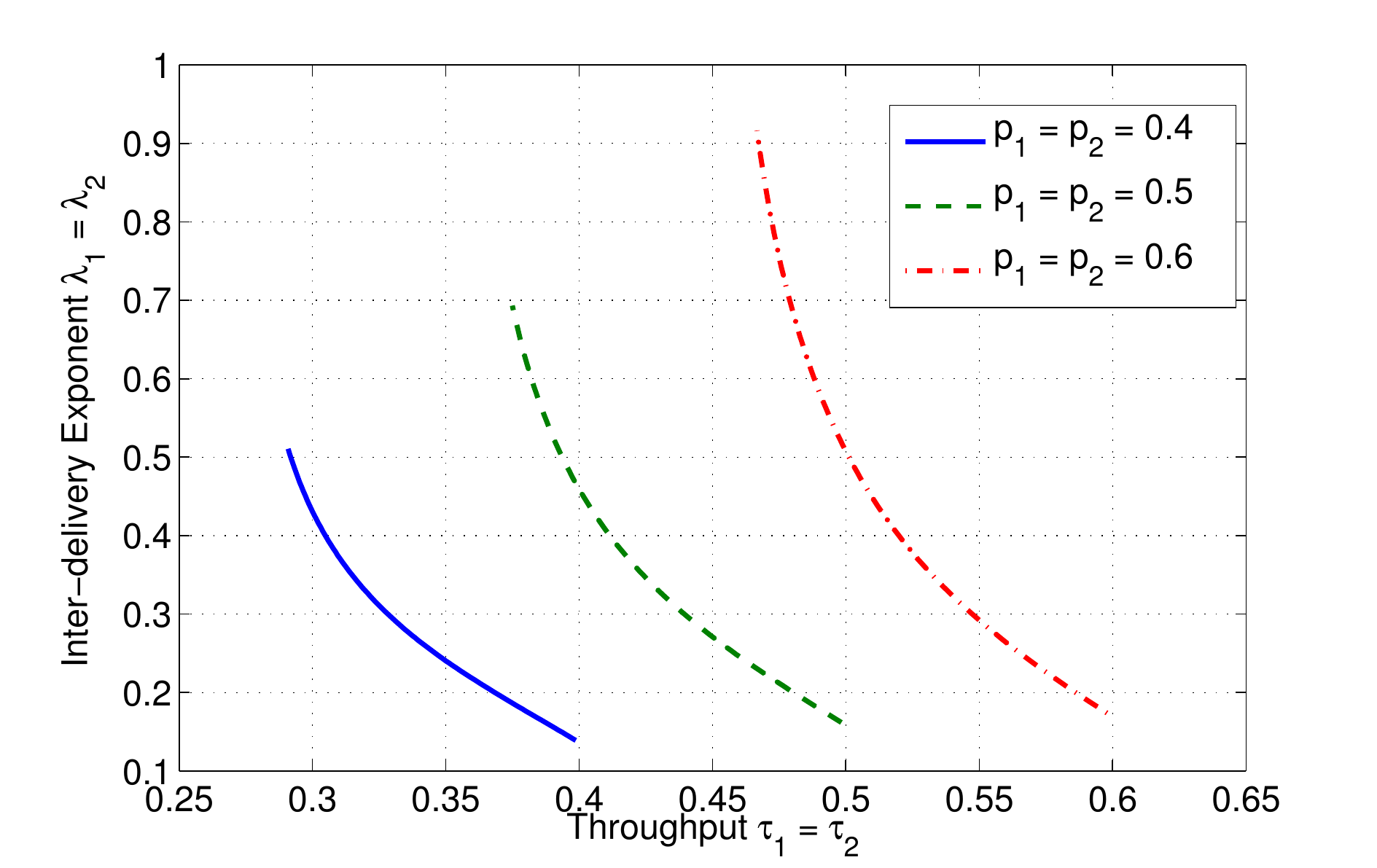}
\caption{Plot of the throughput-smoothness trade-off for different values $p_1 = p_2$. On each curve, $q_1 = q_2$ varies from $0$ from $1$.\label{fig:p_1_eq_p2_q_1_eq_q2_var}}
\end{figure}



\section{Concluding Remarks}
\label{sec:conclu}
\pdfoutput=1
\subsection{Major Implications}

In this paper we consider the problem of streaming over an erasure channel when the packets are required in-order by the receiver application. We investigate the trade-off between the throughput and the smoothness of in-order packet delivery. The smoothness is measured in terms of the smoothness exponent, which we show to be equivalent to an easy-to-analyze metric called the inter-delivery exponent. 

We first study the effect of block-wise feedback affects the throughput and smoothness of packet delivery in point-to-point streaming.  Our analysis shows that frequent feedback drastically improves the smoothness, for the same throughput. We present a spectrum of coding schemes that span different points on the throughput-smoothness trade-off. Depending upon the delay-sensitivity and bandwidth limitations of the applications, one can choose a suitable operating point on this trade-off.

Next we consider the problem where multiple users are streaming data from the source over a broadcast channel, over independent erasure channels with immediate feedback. Since different users decode different sets of packets, the source has to strike balance between giving priority to ensuring in-order packet decoding at each of the users. We study the inter-dependence between the throughput-delay trade-offs for the case of two users and develop coding schemes to tune the priority given to each user. 



\subsection{Future Perspectives}

In this paper, we assume strict in-order packet delivery, without allowing packet dropping, which can improve the smoothness exponent. Determining how significant this improvement is remains to be explored. Another possible future research direction is to extend the multicast streaming analysis to more users using the Markov chain-based techniques developed here for the two user case. A broader research direction is to consider the problem of streaming from distributed sources that are storing overlapping sets of the data. Using this diversity can improve the delay in decoding each individual packet. 

\begin{appendix}
\pdfoutput=1
\begin{proof}[Proof of \Cref{thm:equiv_exponents}]
The in-order decoding delay $D_k$ of packet $s_k$ can be expressed as a sum of inter-delivery times as follows.
\begin{align}
D_k  = T_1 + T_2 + \dots T_W 
\end{align}
where $W$ is the number of in-order delivery instants until packets $s_1$, \dots $s_k$ are decoded. The random variable $W$ can take values $1 \leq W \leq k$, since multiple packets may be decoded at one in-order delivery instants. Note that successive inter-delivery times $T_1, T_2, \dots, T_W$ are not i.i.d. The tail probability $\Pr(T_1 > t)$ of the first inter-delivery time is,
\begin{align}
\Pr(T_1 > t) &\geq  \Pr(T_i > t) \text{ for all integers } i, t \geq 0. \label{eqn:T_1_slowest}
\end{align} 
This is because during the first inter-delivery time $T_1$ we start with no prior information. During time $T_1$, the receiver may collect coded combinations that it is not able to decode. For a time-invariant scheme, these coded combinations can result in faster in-order decoding and hence a smaller $T_i$ for $ i > 1$. 


We now find lower and upper bounds on $\Pr(D_k \geq n)$ to find the decay rate of $D_k$. The lower bound can be derived as follows.
\begin{align}
\Pr( D_k \geq n) &= \mathbb{E}_{W} \left[\Pr(T_1 + T_2 + .. T_W \geq n) \right]\\
&\geq \Pr(T_1 \geq n) \label{eqn:D_k_lower_bnd_1} \\
&\doteq e^{-\lambda n} \label{eqn:D_k_lower_bnd_2}
\end{align}
where \eqref{eqn:D_k_lower_bnd_2} follows from Definition~\ref{defn:inter_deli_exponent}. Now we derive an upper bound on $\Pr(D_k > n)$.

\begin{align}
\Pr(D_k \geq n) &=\mathbb{E}_{W}  \left[ \Pr(T_1 + T_2 + \dots + T_W \geq n) \right] \\
&\leq \mathbb{E}_{W}  \left[ \Pr(T_1^{(1)} + T_1^{(2)} + \dots + T_1^{(W)}\geq n)  \right]\\
&= \mathbb{E}_{W}  \left[ \sum_{n_i: \sum_{i=1}^{W} n_i = n} \prod_{i=1}^{W} \Pr(T_1^{(i)} > n_i)  \right] \label{eqn:D_k_upper_bnd_3} \\
&\doteq \mathbb{E}_{W}  \left[ \sum_{n_i: \sum_{i=1}^{W} n_i = n} e^{-\lambda (n_1 + \dots + n_W)}  \right] \label{eqn:D_k_upper_bnd_4}\\
&= \mathbb{E}_{W}  \left[ \binom{n +W-1}{ W-1}e^{-\lambda n}  \right] \label{eqn:D_k_upper_bnd_5}\\
&\doteq  \mathbb{E}_{W}  \left[ e^{-\lambda n}  \right] \label{eqn:D_k_upper_bnd_6}\\
&= e^{-\lambda n} \label{eqn:D_k_upper_bnd_7}
\end{align}
where $T_1^{(i)}$ are i.i.d.\ samples from the probability distribution of the first inter-delivery time $T_1$. By \eqref{eqn:T_1_slowest}, replacing $T_i$ by $T_1^{(i)}$ gives an upper bound on the probability $\Pr(D_k \geq n)$. Since $T_1^{(i)}$ are i.i.d.\ we can express the upper bound as a product of tail probabilities in \eqref{eqn:D_k_upper_bnd_3}, where $n_i$ are non-negative integers summing to $n$. By \eqref{eqn:decay_rate_T1}, each term in the product in \eqref{eqn:D_k_upper_bnd_3} asymptotically decays at rate $\lambda$. Thus we get \eqref{eqn:D_k_upper_bnd_4} and \eqref{eqn:D_k_upper_bnd_5}. Since $W \leq k << n$, the binomial coefficient decays subexponentially, and we get \eqref{eqn:D_k_upper_bnd_7}. 

From \eqref{eqn:D_k_lower_bnd_2} and \eqref{eqn:D_k_upper_bnd_7} we can conclude that the asymptotic decay rate $\lambda_k^{(s)}$ for any $k$ is equal to the inter-delivery exponent $\lambda$.
\end{proof}

\vspace{0.5cm}
\begin{proof}[Proof of \Cref{thm:opt_no_feedback}]
 We first show that the scheme with transmit index $V[n] = \lceil rn \rceil$ in time slot $n$ achieves the trade-off $(\tau, \lambda) = (r , D(r \| p))$. Then we prove the converse by showing that no other full-rank scheme gives a better trade-off.

\textbf{Achievability Proof:} Consider the scheme with transmit index $V[n] = \lceil r n\rceil$, where $r$ represents the rate of adding new packets to the transmitted stream. The rate of adding packets is below the capacity of the erasure channel. Thus it is easy to see that the throughput $\tau = r$. Let $E[n]$ be the number of combinations, or equations received until time $n$. It follows the binomial distribution with parameter $p$. All packets  $s_1 \cdots s_{V[n]}$ are decoded when $E[n] \geq V[n]$. Define the event $G_n = \{ E[j] < V[j] \text{ for all } 1 \leq j \leq n\}$, that there is no packet decoding until slot $n$. The tail distribution of the first inter-delivery time $T_1$ is,
\begin{align}
\Pr(T_1>n) &= \sum_{k=0}^{\lceil n r \rceil - 1}  \Pr(E[n]=k) \Pr(G_n | E[n] = k),\nonumber\\
&= \sum_{k=0}^{\lceil n r \rceil - 1}  \binom{n}{k} p^{k} (1-p)^{n-k} \Pr(G_n | E[n] = k), \nonumber
\end{align}
where $\Pr(G_n | E[n] = k) = 1-k/n$ as given by the Generalized Ballot theorem in \cite[Chapter~4]{durrett}. Hence it is sub-exponential and does not affect the exponent of $\Pr(T_1> n)$ and we have
\begin{align}
\Pr(T_1>n) & \doteq \sum_{k=0}^{\lceil n r \rceil - 1}  \binom{n}{k} p^{k} (1-p)^{n-k},  \label{eqn:asym_T_no_fb_1}\\
& \doteq \binom{n}{\lceil n r \rceil - 1} p^{\lceil n r \rceil - 1} (1-p)^{n-\lceil n r \rceil + 1},  \label{eqn:asym_T_no_fb_2}\\
& \doteq e^{-n D(r \| p)}, \label{eqn:asym_T_no_fb_3}
\end{align}
where in \eqref{eqn:asym_T_no_fb_1} we take the asymptotic equality $\doteq$ to find the exponent of $\Pr(T_1>n)$, and remove the $\Pr(G_n | E[n] = k)$ term because it is sub-exponential. In \eqref{eqn:asym_T_no_fb_2}, we only retain the $k = \lceil n r \rceil - 1$ term from the summation because for $r \leq p$, that term asymptotically dominates other terms. Finally, we use the Stirlings approximation $\binom{n}{k} \approx e^{n H(k/n)}$ to obtain \eqref{eqn:asym_T_no_fb_3}.


\vspace{0.2cm}
\textbf{Converse Proof:}
First we show that the transmit index $V[n]$ of the optimal full-rank scheme should be non-decreasing in $n$. Given any scheme, we can permute the order of transmitting the coded packets such that $V[n]$ is non-decreasing in $n$. This does not affect the throughput $\tau$, but it can improve the inter-delivery exponent $\lambda$ because decoding can occur sooner when the initial coded packets include fewer source packets.

We now show that it is optimal to have $V[n] = \lceil rn \rceil$, where we add new packets to the transmitted stream at a constant rate $r$. Suppose a full-rank scheme uses rate $r_i$ for $n_i$ slots for all $1 \leq i \leq L$, such that $\sum_{i=0}^{L} n_i = n$ and $\sum_{i=1}^{L} n_i r_i = nr$. Then, the tail distribution of $T_1$ is,
\begin{align}
\Pr(T_1>n) &= \sum_{k=0}^{\lceil \sum_{i=1}^{L} n_i r_i \rceil - 1}  \Pr(E[n]=k) \Pr(G_n | E[n] = k),\label{eqn:asym_T_no_fb_4}\\
&\doteq \sum_{k=0}^{\lceil n r \rceil - 1}  \binom{n}{k} p^{k} (1-p)^{n-k}, \label{eqn:asym_T_no_fb_5} \\
& \doteq e^{-n D(r \| p)}. \label{eqn:asym_T_no_fb_6}
\end{align}
Varying the rate of adding packets affects the term $\Pr(G_n | E[n] = k)$ in \eqref{eqn:asym_T_no_fb_4}, but it is still $\omega(1/n)$ and we can eliminate it when we take the asymptotic equality in \eqref{eqn:asym_T_no_fb_5}. As a result, the in-order delay exponent is same as that if we had a constant rate $r$ of adding new packets to the transmitted stream. Hence we have proved that no other full-rank scheme can achieve a better $(\tau, \lambda)$ trade-off than $V[n] = \lceil nr \rceil$ for all $n$.
\end{proof}

\begin{proof}[Proof of \Cref{lem:interpolate_bw_schemes}]
Here we prove the result for $B=2$, that is randomizing between two schemes. It can be extended to general $B$ using induction. Given two time-invariant schemes $\mathbf{x}^{(1)}$ and $\mathbf{x}^{(2)}$ that achieve the throughput-delay trade-offs $( \tau_{\mathbf{x}^{(1)}}, \lambda_{\mathbf{x}^{(1)}})$ and $(\tau_{\mathbf{x}^{(2)}},\lambda_{\mathbf{x}^{(2)}})$ respectively, consider a randomized strategy where, in each block we use the scheme $\mathbf{x}^{(1)}$ with probability $\mu$ and scheme $\mathbf{x}^{(2)}$ otherwise. Then, it is easy to see that the throughput on the new scheme is $\tau = \mu \tau_{\mathbf{x}^{(1)}} + (1-\mu) \tau_{\mathbf{x}^{(2)}}$. 

Now we prove the inter-delivery exponent $\lambda$ is also a convex combinations of $\lambda_{\mathbf{x}^{(1)}}$ and $\lambda_{\mathbf{x}^{(2)}}$. Let $p_{d_1}$ and $p_{d_2}$ be the probabilities of decoding the first unseen packet in a block using scheme $\mathbf{x}^{(1)}$ and $\mathbf{x}^{(2)}$ respectively. Suppose in an interval with $k$ blocks, we use scheme $\mathbf{x}^{(1)}$ for $h$ blocks, and scheme $\mathbf{x}^{(2)}$ in the remaining blocks, we have 
\begin{equation}
\Pr(T_1 > kd) = (1-p_{d_1})^{h} (1-p_{d_2})^{k-h}.
\end{equation}
Using this we can evaluate $\lambda$ as,
\begin{align}
\lambda &= \lambda_{\mathbf{x}^{(1)}} \lim_{k \rightarrow \infty} \frac{h}{k} +\lambda_{\mathbf{x}^{(2)}} \lim_{k \rightarrow \infty} \frac{k-h}{k} \label{eqn:lambda_convex_comb} \\
 &= \mu \lambda_{\mathbf{x}^{(1)}} + (1-\mu) \lambda_{\mathbf{x}^{(2)}}
\end{align}
where we get \eqref{eqn:lambda_convex_comb} using \eqref{eqn:lambda_d}. As $k \rightarrow \infty$, by the weak law of large numbers, the fraction $h/k$ converges to $\mu$. 
\end{proof}

\begin{proof}[Proof of Lemma~\ref{lem:d_2_3_tradeoff}]
When $d=2$ there are only two possible time-invariant schemes $\mathbf{x} = [2,0]$ and $[1, 1]$ that give unique $(\tau, \lambda)$. By Remark~\ref{rem:uniqueness_of_schemes}, all other $\mathbf{x}$ are equivalent to one of these vectors in terms $(\tau, \lambda)$. The vectors $\mathbf{x} = [2,0]$ and $[1, 1]$ correspond to the $a=1$ and $a=2$ codes proposed in Definition~\ref{defn:gen_d_codes}. Hence, the line joining their corresponding $(\tau, \lambda)$ points, as shown in Fig.~\ref{fig:close_to_optimal_tradeoff}, is the best trade-off for $d=2$. 

When $d=3$ there are four time-invariant schemes $\mathbf{x}^{(1)} = [1, 0, 2]$, $\mathbf{x}^{(2)} = [2,1,0]$, $\mathbf{x}^{(3)} =[1,2,0]$ and $\mathbf{x}^{(4)} = [3,0,0]$ that give unique $(\tau, \lambda)$, according to Definition~\ref{defn:time_invariant} and Remark~\ref{rem:uniqueness_of_schemes}. The vectors $\mathbf{x}^{(1)}$, $\mathbf{x}^{(2)}$ and $\mathbf{x}^{(4)}$ correspond to the codes with $a = 1,2,3$ in Definition~\ref{defn:gen_d_codes}. The throughput-delay trade-offs $(\tau_{\mathbf{x}^{(i)}},\lambda_{\mathbf{x}^{(i)}})$ for $i = 1,2,4$ achieved by these schemes are given by \eqref{eqn:close_to_optimal}. From Claim~\ref{clm:cost_of_opt_lambda} and Claim~\ref{clm:cost_of_opt_tau} we know that $(\tau_{\mathbf{x}^{(1)}},\lambda_{\mathbf{x}^{(1)}})$ and $(\tau_{\mathbf{x}^{(4)}},\lambda_{\mathbf{x}^{(4)}})$ have to be on the optimal trade-off. By comparing the slopes of the lines joining these points we can show that the point $(\tau_{\mathbf{x}^{(2)}},\lambda_{\mathbf{x}^{(2)}})$ lies above the line joining $(\tau_{\mathbf{x}^{(1)}},\lambda_{\mathbf{x}^{(1)}})$ and $(\tau_{\mathbf{x}^{(4)}},\lambda_{\mathbf{x}^{(4)}})$ for all $p$. Fig.~\ref{fig:close_to_optimal_tradeoff} illustrates this for $p=0.6$.
%
For the scheme with $\mathbf{x}^{(3)} =[1,2,0]$, we have
\[ (\tau_{\mathbf{x}^{(3)}},\lambda_{\mathbf{x}^{(3)}}) =\left( (3p-p^3)/3, -(\log(1-p)^2(1+p))/3\right). \]
Again, by comparing the slopes of the lines joining $(\tau_{\mathbf{x}^{(i)}},\lambda_{\mathbf{x}^{(i)}})$ for $i = 1, \cdots 4$ we can show that for all $p$, $(\tau_{\mathbf{x}^{(3)}},\lambda_{\mathbf{x}^{(3)}})$ lies below the piecewise linear curve joining $(\tau_{\mathbf{x}^{(i)}},\lambda_{\mathbf{x}^{(i)}})$ for $i = 1,2,4$.
%
\end{proof}

\begin{proof}[Proof on \Cref{clm:stability_fixed_primary_markov}]
We now solve for the steady-state distribution of this Markov chain. Let $\pi_i$ and $\pi'_i$ be the steady-state probabilities of states $i$ for $i \geq -1$ and advantages states $i'$ for all $i \geq 0$ respectively. The steady-state transition equations are given by
\begin{align}
(1-a-d) \pi_i  &=b (\pi_{i-1} + \pi'_i) + a \pi'_{i+1} \quad \text{ for } i \geq 1 \label{eqn:steady_state_4},\\
(1-d) \pi'_i &= c( \pi_i + \pi'_{i+1}) \quad \quad \quad \quad \,\,\,\, \text{ for } i \geq 1 \label{eqn:steady_state_5}, \\
(1-c-d) \pi_{-1} &= c( \pi_0+ \pi'_{1}) \label{eqn:steady_state_1},\\
(1-a-d) \pi_0 &=a \pi'_1 + (a+b)\pi_{-1} \label{eqn:steady_state_2}.
\end{align}

By rearranging the terms in \eqref{eqn:steady_state_4}-\eqref{eqn:steady_state_2}, we get the following recurrence relation,
\begin{equation}
\pi_i =\frac{(1-a-d)}{c} \pi_{i-1} - \frac{b}{c}\pi_{i-2}  \quad \text{ for } i \geq 2.
\label{eqn:pi_recurrence}
\end{equation}
Solving the recurrence in \eqref{eqn:pi_recurrence} and simplifying \eqref{eqn:steady_state_4}-\eqref{eqn:steady_state_2} further, we can express $\pi_i$, $\pi'_i$ for $i \geq 2$ in terms of $\pi_1$ as follows,
\begin{align}
\frac{\pi_i}{\pi_{i-1}}&= \frac{b}{c},  \label{eqn:steady_state_recursion_1}\\
\frac{\pi'_i}{\pi_i} &=\frac{c}{a+c}. \label{eqn:steady_state_recursion_2}
\end{align}
From \eqref{eqn:steady_state_recursion_1} we see that the Markov chain is positive-recurrent and a unique steady-state distribution if and only if $b<c$, which is equivalent to $p_1 < p_2$. If $p_1 \geq p_2$, the expected recurrence time to state $0$, that is the time taken for $U_2$ to catch up with $U_1$ is infinity. When the Markov chain is positive recurrent, we can use \eqref{eqn:steady_state_recursion_1} and \eqref{eqn:steady_state_recursion_2} to solve for all the steady state probabilities. 
\end{proof}

\begin{proof}[Proof of \Cref{lem:piggybacking_user_trade-off}]
Since we always give priority to the primary user $U_1$, we have $(\tau_1, \lambda_1) = (p_1,-\log(1-p_1))$. When $p_1 < p_2$, we can express the throughput $\tau_2$ in terms of the steady state probabilities of the Markov chain in Fig~\ref{fig:fixed_primary_markov}. User $U_2$ experiences a throughput loss when it is in state $-1$ and the next slot is successful. Thus, when $p_2 > p_1$,
\begin{align}
\tau_2 &= p_2 (1-\pi_{-1}), \\
&= p_2 \left(1-\frac{c-b}{a+c} \right) = p_1. \label{eqn:tau_2_stable} 
\end{align}
If $p_1 \geq  p_2$, the system drifts infinitely to the right side. There is a non-zero probability that in-order decoding via advantage states is not able to catch up and fill all gaps in decoding of $U_2$. Thus, we cannot evaluate $\tau_2$ using this Markov chain analysis. 

To determine $\lambda_2$, first observe that $U_2$ decodes an in-order packet when the system is in state $0$ or states $i'$, for $i \geq 1$, and the next slot is successful. As given by Definition~\ref{defn:inter_deli_exponent}, the inter-delivery exponent $\lambda_2$ is the asymptotic decay rate of $\Pr(T_1>t)$, the probability that no in-order packet is decoded by $U_2$ for $t$ consecutive slots. To determine $\lambda$, we add an absorbing state $F$ to the Markov chain as shown in Fig.~\ref{fig:fixed_primary_absorb}, such that the system transitions to $F$ when an in-order packet is decoded by $U_2$. 

 \begin{figure}[t]
\centering
\includegraphics[width=2.3in]{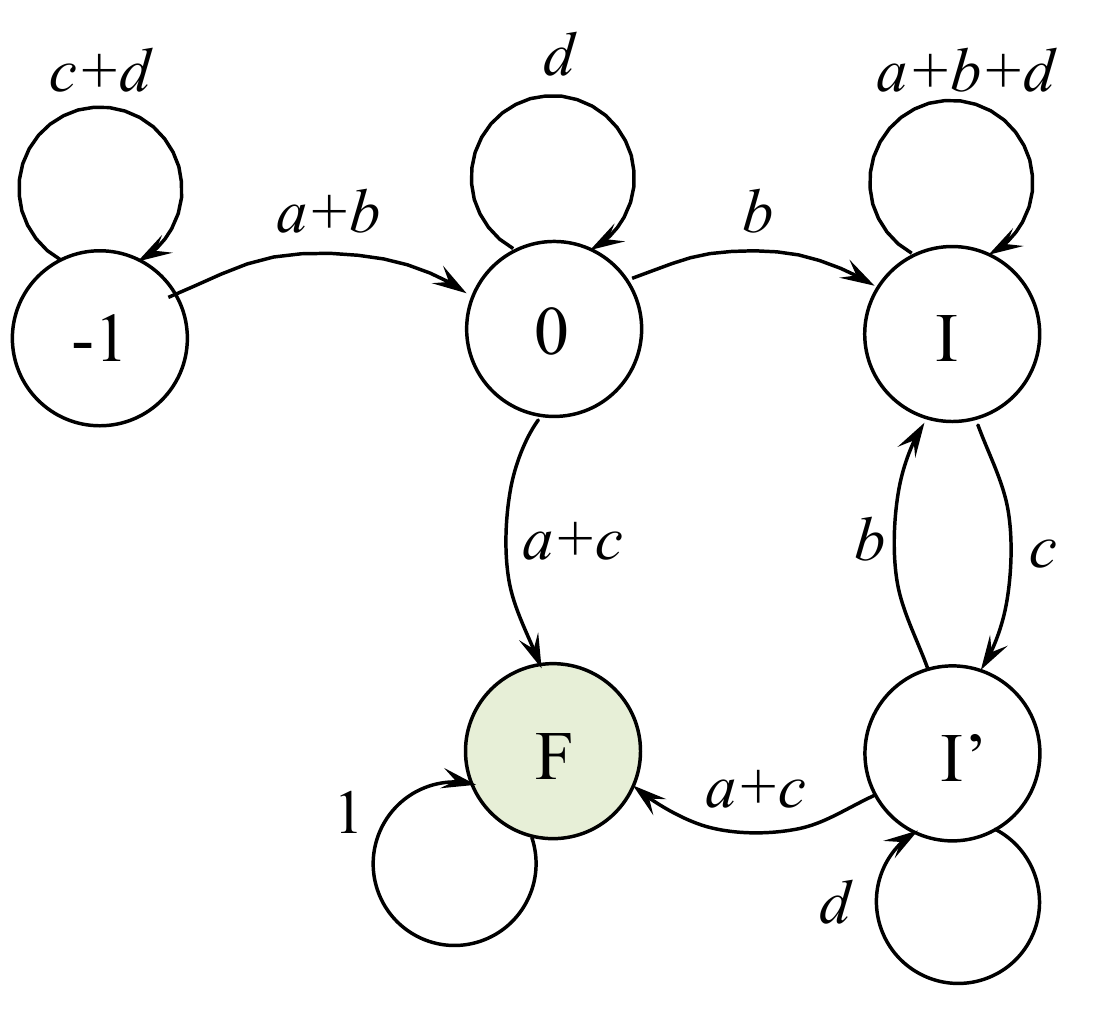}
\caption{Markov model used to determine the inter-delivery exponent $\lambda_2$ of user $U_2$. The absorbing state $F$ is reached when an in-order packet is decoded by $U_2$. The exponent of the distribution of the time taken to reach this state is $\lambda_2$. \label{fig:fixed_primary_absorb}}
\vspace{-0.6cm}
\end{figure}

In Fig.~\ref{fig:fixed_primary_absorb}, all the states $i$ and $i'$ for $i \geq 1$ are fused into states $I$ and $I'$ because this does not affect the probability distribution of the time to reach the absorbing state $F$. The inter-delivery exponent $\lambda_2$ is equal to the rate of convergence of this Markov chain to its steady state, which is known to be (see \cite[Chapter~4]{dsp_gallager})  $\lambda_2 = -\log \xi_2$ where $\xi_2$ is the second largest eigenvalue of the state transition matrix of the Markov chain,
\begin{align}
A &=  \left( \begin{array}{ccccc}
d & b & 0 & 0 & a+c \\
0 & a+b+d & c & 0 & 0 \\
0 & b & d & 0 & a+c \\
a+b & 0 & 0 & c+d & 0 \\
0 & 0 &  0 & 0 & 1 
\end{array} \right).
\end{align}

Solving for the second largest eigen-value of $A$, we can show that
\begin{align}
\xi_2 &= \max\left(1-p_1, \frac{\left( 1-c+d + \sqrt{ (1-c+d)^2 + 4(bc+cd-d)} \right)}{2} \right).
\end{align}
Hence the inter-delivery exponent $\lambda_2 = -\log \xi_2$ is as given by \eqref{eqn:lambda_fixed_prim}. 
\end{proof}

\begin{proof}[Proof of \Cref{lem:priority_q1_q2_trade-off}]
The state-transition equations of the Markov chain are as follows. 
\begin{align}
(\bar{d}-a) \pi_0 &= a (\pi'_1 + \pi'_{-1}) + q_1(a+b)  \pi_{-1}+ q_2(a+c)  \pi_1  \label{eqn:stab_steady_state_1}\\
(\bar{d}-\bar{q_2}a-q_2 b ) \pi_i  &= q_2 (a+c)  \pi_{i+1} +  \bar{q_2} b \pi_{i-1} + b \pi'_i + a\pi'_{i+1} \quad \quad \text{ for } i \geq 2 \label{eqn:stab_steady_state_2} \\
(\bar{d}-\bar{q_1} a -q_1 c ) \pi_i  &= q_1(a+b) \pi_{i+1} + \bar{q_1} c \pi_{i+1} + c \pi'_i + a \pi'_{i-1} \quad \quad \text{ for } i \leq -2 \label{eqn:stab_steady_state_3}\\
(\bar{d}-\bar{q_2}a -q_2 b ) \pi_1  &= q_2 (a+c)  \pi_{2} +  b( \pi'_{-1} + \pi_{0} + \pi'_1) + a\pi'_{2} \label{eqn:stab_steady_state_4}  \\
(\bar{d}-\bar{q_1}a-q_1 c ) \pi_{-1} &= q_1(a+b) \pi_{-2} + c( \pi'_{-1} + \pi_{0} + \pi'_1)+ a \pi'_{-2} \label{eqn:stab_steady_state_5}  \\
\bar{d} \pi'_i &= \bar{q_2}c \pi_i + c \pi'_{i+1} \quad\quad \text{ for } i \geq 1 \label{eqn:stab_steady_state_6}\\
\bar{d} \pi'_i &= \bar{q_2}c \pi_i + b \pi'_{i+1}  \quad\quad \text{ for } i \leq -1 \label{eqn:stab_steady_state_7}
\end{align}

Rearranging the terms, we get the following recurrence in the steady-state probabilities on the right-side of the chain,
\begin{align}
\alpha_3 \pi_{i+3}+ \alpha_2 \pi_{i+2} + \alpha_1 \pi_{i+1} +  \alpha_0 \pi_{i} &= 0  \quad \quad \text{ for } i \geq 1
\end{align}
where,
\begin{align}
\alpha_3 &= c(a+c) q_2 \\
 \alpha_2 &= -c \bar{d} + b c q_2 -(a+c) q_2 \bar{d} \\
 \alpha_1 &= \bar{d} (\bar{d} - b q_2 - a \bar{q_2}) \\
 \alpha_0 &= -\bar{d} b \bar{q_2} 
\end{align}

The characteristic equation of this recurrence has the roots $1$, $\rho$ and $\rho'$. We can show that both $\rho$ and $\rho'$ are positive and at least one of them is greater than $1$. The expression for the smaller root is,
\begin{align}
\rho &= -\frac{\alpha_3+ \alpha_2}{2 \alpha_3} - \frac{\sqrt{ (\alpha_3 +  \alpha_2)^2 + 4\alpha_3  \alpha_0}}{2 \alpha_3}
\end{align}

The right-side of the Markov chain is stable if and only if $\rho < 1$. Thus, when $\rho < 1$, the steady-state probabilities $\pi_i$ and $\pi'_i$ for $i \geq 1$ are related by the recurrences,
\begin{align}
\frac{\pi_{i+1}}{\pi_{i}} &= \rho \text{   and   } \frac{\pi'_{i}}{\pi_i} =\frac{ c(1-q_2)}{\bar{d}-c \rho} 
\end{align}

Similarly, for the left side of the chain we have the recurrences,
\begin{align}
\frac{\pi_{i+1}}{\pi_{i}} &= \mu  \text{   and   } \frac{\pi'_{i}}{\pi_i} =\frac{ b(1-q_1)}{\bar{d}-b\mu} 
\end{align}
where 
\begin{align}
\mu &=  -\frac{\beta_3+ \beta_2}{2 \beta_3} - \frac{\sqrt{ (\beta_3 +  \beta_2)^2 + 4\beta_3  \beta_0}}{2 \beta_3} 
\end{align}
with the expressions $\beta_k$ for $ k =0,1,2,3$ being the same as $\alpha_k$ with $b$ and $c$ interchanged, and $q_2$ replaced by $q_1$. We can use these recurrences we can express all steady-state probabilities $\pi_i$ and $\pi'_i$ for $i \geq 1$ in terms of $\pi_1$, and the steady-state probabilities $\pi_i$ and $\pi'_i$ for $i \leq  -1$ in terms of $\pi_{-1}$. Then using the states transition equation \eqref{eqn:stab_steady_state_1}, and the fact that all the steady state probabilities sum to $1$, we can solve for all the steady-state probabilities of the Markov chain.

User $U_1$ receives an innovative in every successful slot except when the source (with probability $q_2$) gives priority to $U_2$ in states $i$, $ i \geq 1$. Thus, if $\rho < 1$ its throughput is given by
\begin{align}
\tau_1 &= p_1 \left( 1- q_2 \sum_{i=1}^{\infty} \pi_i  \right)  = p_1 \left(1 - \frac{q_2 \pi_1}{1-\rho}\right)
\end{align}

Similarly if $\mu < 1$ we have,
\begin{align}
\tau_2 &= p_2 \left(1-q_1 \sum_{i=-\infty}^{-1} \pi_i \right) = p_2 \left(1 - \frac{q_1 \pi_{-1}}{1-\mu}\right)
\end{align} 

Similar to the proof of \Cref{lem:piggybacking_user_trade-off}, we determine the inter-delivery exponent $\lambda_2$ of user $U_2$ by adding an absorbing state $F$ to the Markov chain as shown in Fig.~\ref{fig:stab_markov_absorb_2}, such that the system transitions to $F$ when an in-order packet is decoded by $U_2$. In Fig.~\ref{fig:stab_markov_absorb_2}, all the states $i$ and $i'$ for $i \geq 1$ are fused into states $I$ and $I'$ because this does not affect the probability distribution of the time to reach the absorbing state $F$. The inter-delivery exponent $\lambda_2 = -\log \xi_2$ where $\xi_2$ is the second largest eigenvalue of its state transition matrix of this Markov chain which is given by, 
\begin{align}
A &=  \left( \begin{array}{ccccc}
d & b & 0 & 0 & a+c \\
0 & \bar{q_2} a+b+d & c \bar{q_2} & 0  & q_2(a+c) \\
0 & b & d & 0 & a+c \\
q_1(a+b) & 0 & 0 & d+ q_1 c+ \bar{q_1}b & \bar{q_1}(a+c) \\
0 & 0 & 0 & 0  & 1 
\end{array} \right).
\end{align}
Solving for the second largest eigen-value $\xi_2$ of $A$, we get
\begin{align}
\xi_2 &= \max \left(d+q_1 c + \bar{q_1} b,\frac{\left( 2d+\bar{q_2}a + b+ \sqrt{ (2d+\bar{q_2}a +b)^2 - 4( d(b+d) +  \bar{q_2}(da-bc))} \right)}{2} \right) .
\end{align}
The inter-delivery exponent $\lambda_2 = -\log \xi_2$ and is given by \eqref{eqn:lambda_2_general}. The expression for the inter-delivery exponent $\lambda_1$ of user $U_1$ is same as \eqref{eqn:lambda_2_general} with $b$ and $c$, and $q_1$ and $q_2$ interchanged.

 \begin{figure}[t]
\centering
\includegraphics[width=3.3in]{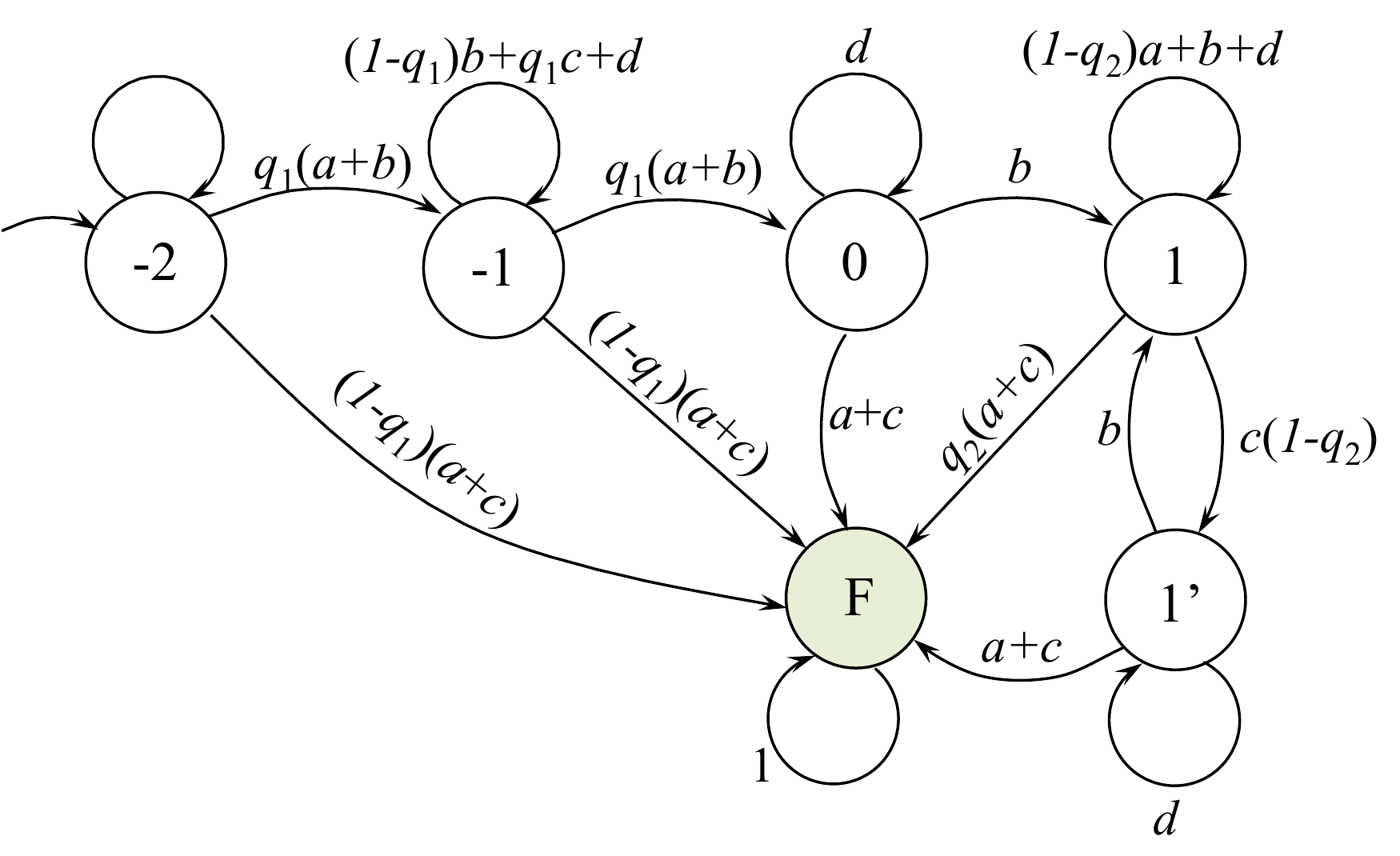}
\caption{Markov model used to determine the inter-delivery exponent $\lambda_2$ of user $U_2$. The absorbing state $F$ is reached when an in-order packet is decoded by $U_2$. The exponent of the distribution of the time taken to reach this state is $\lambda_2$. \label{fig:stab_markov_absorb_2}}
\end{figure}

\end{proof}

\end{appendix}
\bibliographystyle{IEEEtran}
\bibliography{streaming}

\end{document}